
\documentclass[conference,letterpaper]{IEEEtran}
\IEEEoverridecommandlockouts

\usepackage[utf8]{inputenc} 
\usepackage[T1]{fontenc}
\usepackage{ifthen}
\usepackage{cite}
\usepackage[cmex10]{amsmath} 

\usepackage{amsfonts}
\usepackage{amssymb}
\usepackage{amsthm}
\usepackage{enumitem}
\usepackage{xcolor}

\interdisplaylinepenalty=2500 

\newtheorem{theorem}{Theorem}
\newtheorem{lemma}{Lemma}
\newtheorem{definition}{Definition}
\newtheorem{corollary}{Corollary}
\newtheorem{proposition}{Proposition}
\newtheorem{claim}{Claim}

\theoremstyle{definition}
\newtheorem{remark}{Remark}

\theoremstyle{definition} \newtheorem{example}{Example}

\def \extended {1}

\begin{document}

\title{Fractional Subadditivity of Submodular Functions: Equality Conditions and Their Applications\thanks{GK acknowledges support of ANRF, Govt. of India, under project  ANRF/ECRG/2024/005472/ENS. VP acknowledges support of DAE, Govt. of India, under project no. RTI4001.}}


\author{%
  \IEEEauthorblockN{Gunank Jakhar, Gowtham R. Kurri, Suryajith Chillara}
  \IEEEauthorblockA{International Institute of Information Technology, Hyderabad \\
                    Hyderabad, India.\\ 
                    Email: \{gunank.jakhar@research., gowtham.kurri@, suryajith.chillara@\}iiit.ac.in}
  \and
  \IEEEauthorblockN{Vinod M. Prabhakaran}
  \IEEEauthorblockA{Tata Institute of Fundamental Research\\
                    Mumbai, India.\\
                    Email: vinodmp@tifr.res.in}
}

\maketitle

\begin{abstract}
 Submodular functions are known to satisfy various forms of fractional subadditivity. This work investigates the conditions for equality to hold exactly or approximately in the fractional subadditivity of submodular functions. We establish that a small gap in the inequality implies that the function is close to being modular, and that the gap is zero if and only if the function is modular. We then present natural implications of these results for special cases of submodular functions, such as entropy, relative entropy, and matroid rank. As a consequence, we characterize the necessary and sufficient conditions for equality to hold in Shearer's lemma, recovering a result of Ellis~\emph{et~al.} (2016) as a special case. We leverage our results to propose a new multivariate mutual information, which generalizes Watanabe's total correlation (1960), Han's dual total correlation (1975), and Csisz\'ar and Narayan's shared information (2004), and analyze its properties. Among these properties, we extend Watanabe's characterization of total correlation as the maximum correlation over partitions to fractional partitions. When applied to matrix determinantal inequalities for positive definite matrices, our results recover the equality conditions of the classical determinantal inequalities of Hadamard, Sz\'asz, and Fischer as special cases.
\end{abstract}

\section{Introduction}
A submodular function is a set function that exhibits the property of \emph{diminishing returns}, i.e., the additional value gained by adding an element to a set decreases as the set grows larger~\cite{Fujishige05}. Some examples of submodular functions include entropy, matroid rank function, maximum cut in a graph. Submodular functions naturally arise in diverse areas such as machine learning~\cite{krause2007near}, combinatorial optimization~\cite{ConfortiC84}, algorithmic game theory~\cite{lehmann2001combinatorial}, social networks~\cite{kempe2003maximizing}, and statistical physics~\cite{VanEnterFS93}. 

It follows from the definition of submodular functions that they satisfy a property of subadditivity, provided that the function takes the value zero on the empty set. Specifically, the value of a submodular function on the ground set is less than or equal to the sum of the function values over any partition of the ground set. This property naturally generalizes to fractional partitions, where the ground set is covered by a family of overlapping subsets, with each element assigned to these subsets with some probability. This generalization, known as fractional subadditivity~\cite{ollagnier1982filtre,Feige06}, has many applications, including algorithmic game theory~\cite{Feige06,lehmann2001combinatorial} and statistical physics~\cite{VanEnterFS93}. Entropy, being an important example of a submodular function, satisfies fractional subadditivity~\cite{MadimanT10}, which, along with its special cases known as Shearer's lemma~\cite{ChungGFS86,Radhakrishnan03} and Han's inequality~\cite{Han78,Fujishige78}, has found applications in many areas such as graph theory, combinatorics, matrix inequalities, etc. \cite{Radhakrishnan03,MadimanT10,Sason22}.

In this work, we investigate the conditions for equality to hold exactly or approximately in the fractional subadditivity of submodular functions and explore their applications. Our main contributions are as follows:
\begin{itemize}
\item We show that a small inequality gap in the fractional subadditivity of submodular functions implies that the function is \emph{approximately} modular (Theorem~\ref{Stability_Submodular_Function_Partition}).
\item We establish that equality in the fractional subadditivity of a submodular function holds if and only if the function is modular (Theorem~\ref{Equality_Implies_Modularity_Partition}). This result enables the determination of whether a given submodular function $f$ is modular,  with only a minimal additional knowledge about $f$ (see Remark~\ref{remark:modularity}, and Examples~\ref{example1} and \ref{example2}).
\item We explore the implications of these results for specific submodular functions such as entropy, relative entropy, and matroid rank functions (Corollaries~\ref{Stability_Entropy_Partition}-\ref{Matroid_Rank_Equality}). Notably, we provide the necessary and sufficient conditions for equality to hold in Shearer's lemma~\cite[Lemma~1]{Radhakrishnan03} as a special case and this extends a result by Ellis \emph{et al.}~\cite[Lemma~9]{EllisKFY16}. To the best of our knowledge, these conditions, in this level of generality, do not appear explicitly in the literature. 
\item As an application, we propose a new multivariate mutual information measure, which generalizes total correlation~\cite{Watanabe60}, dual total correlation~\cite{Han75}, and shared information~\cite{CsiszarN04,narayan2016multiterminal} (Proposition~\ref{prop:MMI-specialcases}). The latter has operational significance as secret-key capacity for multiple terminals~\cite{CsiszarN04}. We analyze the properties of the proposed measure and extend Watanabe's characterization of total correlation as the maximum correlation over partitions~\cite{Watanabe60} to fractional partitions (Theorem~\ref{Multivariate_Mutual_Information}).
\item Additionally, we study applications of our results in matrix determinantal inequalities (Proposition~\ref{Determinantal_Equality}), recovering equality conditions for classical determinantal inequalities, including those of Hadamard, Sz\'asz, and Fischer~\cite{HornJ12,DemboCT91}.
\end{itemize}

\section{Preliminaries}
\textit{Notation}: We use $[i : i+k]$ to represent the set $\{i, i+1, \ldots, i+k-1, i+k\}$, where $i,k\in\mathbb{N}$. The power set of a set $A$ is denoted by $2^A$. $A^{\text{c}}$ denotes the complement of a set $A$. We use $\mathcal{F}$ to denote a family of subsets of $[1:n]$ allowing for repetitions, represented as $\{\!\!\{\cdot\}\!\!\}$.

\begin{definition}[Sub/Supermodular, and Modular Functions~\cite{Fujishige05}]
A set function $f: 2^{[1:n]} \rightarrow \mathbb{R}$ is called submodular if
\begin{align}
    f(S) + f(T) \geq f(S \cup T) + f(S \cap T),\ \forall S,T \subseteq [1:n].
\end{align}
A function $f: 2^{[1:n]} \rightarrow \mathbb{R}$ is called supermodular if $-f$ is submodular, and modular if it is both submodular and supermodular. Moreover, if 
$f(\phi) = 0,~f$ is modular if and only if $f(A)=\sum_{i\in A}f(\{i\})$, $\forall A\subseteq [1:n]$.
\end{definition}

Madiman and Tetali~\cite{MadimanT10} demonstrated that submodular functions satisfy a conditioning property and a chain rule under an appropriately defined notion of conditioning. Specifically, for $S,T\subseteq [1:n]$, the conditional version of submodular function is defined as $f(S|T) = f(S\cup T) -f(T)$. Let $S,T,U$ be disjoint subsets of $[1:n]$, then \cite[Lemma~IV]{MadimanT10} states that
\begin{align}
      f(S | T, U) &\leq f(S | T), \label{eqref: conditioning}\\
       f([1:n]) &= \sum_{i=1}^n f(\{i\} | [1:i-1]), \label{eqref: chain_rule}
\end{align}
where $f(S|T,U):=f(S|T\cup U)$. We now recall the following related concepts from fractional graph theory~\cite{ScheinermanU97}, \cite[Definition~II]{MadimanT10} as we would need them for our results. 

\begin{definition}[Fractional Covering, Packing, and Partition]
    \hspace{0.5em}
    \begin{enumerate}[leftmargin=*]
         \item Given a family $\mathcal{F}$ of subsets of $[1:n]$, a function $\alpha:~\mathcal{F}\rightarrow\mathbb{R}_{+}$ is called a fractional covering, if for all $i \in [1:n]$, $\sum\limits_{S \in \mathcal{F}: i \in S} \alpha(S) \geq 1$ .
        \item Given a family $\mathcal{F}$ of subsets of $[1:n]$, a function $\beta:~\mathcal{F}\rightarrow\mathbb{R}_{+}$ is called a fractional packing, if for all $i \in [1:n]$, $\sum\limits_{S \in \mathcal{F}: i \in S} \beta(S) \leq 1$.
        \item If $\gamma:\mathcal{F}\rightarrow\mathbb{R}_{+}$ is both a fractional covering and a fractional packing, then it is called a fractional partition.
    \end{enumerate}
\end{definition}
In this paper, we investigate the exact and approximate equality conditions of inequalities stated in the following theorem.
\begin{theorem}[\!\!{\cite{MadimanT10}}\footnote{\eqref{eqn:MTIneq} is the weak-fractional form of \cite[Theorem~I]{MadimanT10}, as presented in \cite[Section~VII]{MadimanT10}.}]\label{MT inequality}

    Let $f: 2^{[1:n]} \rightarrow \mathbb{R}$ be any submodular function with $f(\phi) = 0$. Let $\gamma:\mathcal{F}\rightarrow \mathbb{R}_+$ be any fractional partition with respect to a family $\mathcal{F}$ of subsets of $[1:n]$. Then
    \begin{align}\label{eqn:MTIneq}
        \sum\limits_{S \in \mathcal{F}} \gamma(S)f(S | S^\emph{c}) \leq f([1:n]) \leq \sum\limits_{S \in \mathcal{F}} \gamma(S)f(S).
    \end{align}
The fractional partition $\gamma$ in the lower and upper bounds can be replaced by fractional packing $\beta$ and fractional covering $\alpha$, respectively, if the submodular function $f$ is such that $f([1:j])$ is non-decreasing in $j$ for $j\in[1:n]$.
\end{theorem}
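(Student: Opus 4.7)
The plan is to derive both bounds in \eqref{eqn:MTIneq} from the two tools already stated: the ``conditioning reduces value'' inequality \eqref{eqref: conditioning} and the chain rule \eqref{eqref: chain_rule}. In both directions the strategy will be the same: expand the local quantity ($f(S)$ for the upper bound, $f(S\mid S^{\text{c}})$ for the lower bound) as a telescoping sum of one-element conditionals, compare each such term with its global counterpart $f(\{i\}\mid[1:i-1])$ via \eqref{eqref: conditioning}, swap the order of summation, and collapse the right-hand side with \eqref{eqref: chain_rule}. The partition condition $\sum_{S\ni i}\gamma(S)=1$ is precisely what will make this collapse exact.

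For the upper bound I would fix $S\in\mathcal{F}$, list its elements in natural order, and write the telescoping identity
\begin{align*}
f(S)=\sum_{i\in S} f(\{i\}\mid S\cap[1:i-1]),
\end{align*}
which uses only $f(\phi)=0$. Since $S\cap[1:i-1]\subseteq[1:i-1]$, \eqref{eqref: conditioning} gives $f(\{i\}\mid S\cap[1:i-1])\geq f(\{i\}\mid[1:i-1])$. Multiplying by $\gamma(S)$, summing over $S\in\mathcal{F}$, and exchanging the order of summation yields
\begin{align*}
\sum_{S\in\mathcal{F}}\gamma(S)f(S)\geq\sum_{i=1}^{n}\Bigl(\,\sum_{S\ni i}\gamma(S)\Bigr)f(\{i\}\mid[1:i-1]).
\end{align*}
For a fractional partition the bracketed sum equals $1$ for every $i$, so the right-hand side collapses to $f([1:n])$ by \eqref{eqref: chain_rule}.

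For the lower bound I would expand $f(S\mid S^{\text{c}})$ analogously, listing elements of $S$ in natural order after $S^{\text{c}}$:
\begin{align*}
f(S\mid S^{\text{c}})=\sum_{i\in S} f\bigl(\{i\}\mid S^{\text{c}}\cup(S\cap[1:i-1])\bigr).
\end{align*}
This time $[1:i-1]\subseteq S^{\text{c}}\cup(S\cap[1:i-1])$, because every $j<i$ lies either in $S\cap[1:i-1]$ or in $S^{\text{c}}$; \eqref{eqref: conditioning} therefore runs in the reverse direction and gives $f(\{i\}\mid S^{\text{c}}\cup(S\cap[1:i-1]))\leq f(\{i\}\mid[1:i-1])$. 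Weighting and summing as before produces
\begin{align*}
\sum_{S\in\mathcal{F}}\gamma(S)f(S\mid S^{\text{c}})\leq\sum_{i=1}^{n}\Bigl(\,\sum_{S\ni i}\gamma(S)\Bigr)f(\{i\}\mid[1:i-1]),
\end{align*}
and for a partition the right-hand side is again $f([1:n])$.

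The main point that will require care, and which I expect to be the only real obstacle, is the extension to fractional coverings (for the upper bound) and packings (for the lower bound). There $\sum_{S\ni i}\gamma(S)$ is $\geq 1$ or $\leq 1$ rather than exactly $1$, so replacing it by $1$ preserves the desired inequality only if each increment $f(\{i\}\mid[1:i-1])=f([1:i])-f([1:i-1])$ carries the appropriate sign. Non-negativity of these increments is exactly the hypothesis that $f([1:j])$ is non-decreasing in $j$, so the sign bookkeeping lines up correctly in both directions and both extensions follow immediately.
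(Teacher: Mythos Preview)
Your proof is correct and is exactly the standard argument. Note, however, that the paper does not give its own proof of this theorem: it is quoted from \cite{MadimanT10} as a preliminary result, so there is nothing in the paper to compare your argument against directly. That said, the same mechanism you use---expand $f(S)$ (respectively $f(S\mid S^{\text{c}})$) as a telescoping sum of one-element conditionals via \eqref{eqref: chain_rule}, compare termwise with $f(\{i\}\mid[1:i-1])$ via \eqref{eqref: conditioning}, swap the order of summation, and use the fractional-partition identity---is precisely what the paper deploys in Appendix~\ref{appendix:b} when lower-bounding $\text{Gap}_{\text{U}}(f,\mathcal{F},\gamma)$ for the stability result, so your approach is fully consistent with the paper's methods.
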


The upper bound in \eqref{eqn:MTIneq}, with fractional covering $\alpha$, has been referred to as fractional subadditivity of submodular functions in the literature~\cite{Feige06}. There is a duality between the upper and lower bounds in \eqref{eqn:MTIneq}, relating the gaps in the inequalities. For any family $\mathcal{F}$, its complementary family is defined as $\bar{\mathcal{F}} = \{\!\!\{ S^{\text{c}}: S \in \mathcal{F}\}\!\!\}$. For a fractional partition $\gamma$, its dual fractional partition $\Bar{\gamma}$ is defined by $\Bar{\gamma}(S^{\text{c}}) = \frac{\gamma(S)}{w(\gamma) - 1}, \quad \forall S \in \mathcal{F}$, where $w(\gamma)$ denotes the weight of $\gamma$, given by $w(\gamma) = \sum_{S \in \mathcal{F}} \gamma(S)$.

\begin{theorem}[\!\!{\cite[Theorem~IV]{MadimanT10}}]\label{Duality}
    Let $f: 2^{[1:n]} \rightarrow \mathbb{R}$ be any submodular function with $f(\phi) = 0$. Let $\gamma:\mathcal{F}\rightarrow \mathbb{R}_+$ be any fractional partition with respect to a family $\mathcal{F}$ of subsets of $[1:n]$. Then,
    \begin{align}\label{eq: dual}
        \frac{\emph{Gap}_\emph{U}(f, \mathcal{F}, \gamma)}{w(\gamma)} = \frac{\emph{Gap}_\emph{L}(f, \Bar{\mathcal{F}}, \Bar{\gamma})}{w(\Bar{\gamma})},
    \end{align}
    where
    \begin{align}
        \emph{Gap}_\emph{L}(f, \mathcal{F}, \gamma) &= f([1:n]) - \sum\limits_{S \in \mathcal{F}} \gamma(S)f(S | S^{\emph{c}}),\ \text{and} \label{eq: lgap}\\
        \emph{Gap}_\emph{U}(f, \mathcal{F}, \gamma) &= \sum\limits_{S \in \mathcal{F}} \gamma(S)f(S) - f([1:n]). \label{eq: ugap}
    \end{align}
\end{theorem}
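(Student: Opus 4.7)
The plan is to verify the identity by direct algebraic manipulation from the definitions, keyed on the elementary observation that for any set function $f$ and any $S\subseteq [1:n]$,
\[
f(S^{\text{c}}\mid S)\;=\;f(S\cup S^{\text{c}})-f(S)\;=\;f([1:n])-f(S),
\]
which follows immediately from the conditional definition $f(A\mid B)=f(A\cup B)-f(B)$. Notably, neither submodularity nor $f(\phi)=0$ actually enters here; only the well-posedness of $\bar{\gamma}$ matters, which implicitly assumes $w(\gamma)>1$.

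First I would confirm that $\bar{\gamma}$ is itself a fractional partition on $\bar{\mathcal{F}}$: for each $i\in[1:n]$,
\[
\sum_{T\in\bar{\mathcal{F}}:\,i\in T}\bar{\gamma}(T)\;=\;\sum_{S\in\mathcal{F}:\,i\notin S}\frac{\gamma(S)}{w(\gamma)-1}\;=\;\frac{w(\gamma)-1}{w(\gamma)-1}\;=\;1,
\]
using $\sum_{S\ni i}\gamma(S)=1$. Summing the definition of $\bar{\gamma}$ over $\bar{\mathcal{F}}$ then yields $w(\bar{\gamma})=w(\gamma)/(w(\gamma)-1)$.

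Next I would expand $\text{Gap}_{\text{L}}(f,\bar{\mathcal{F}},\bar{\gamma})$ via~\eqref{eq: lgap}, substitute $T=S^{\text{c}}$, and apply the identity above. A brief consolidation collapses the inner sum to $(w(\gamma)\,f([1:n])-\sum_{S}\gamma(S)f(S))/(w(\gamma)-1)$, from which
\[
\text{Gap}_{\text{L}}(f,\bar{\mathcal{F}},\bar{\gamma})\;=\;\frac{\text{Gap}_{\text{U}}(f,\mathcal{F},\gamma)}{w(\gamma)-1}
\]
drops out after combining with $f([1:n])$. Dividing both sides by $w(\bar{\gamma})=w(\gamma)/(w(\gamma)-1)$ cancels the $w(\gamma)-1$ factor and produces the asserted duality.

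I do not foresee any substantive obstacle; the calculation is essentially bookkeeping. The one mild care point is the degenerate case $w(\gamma)=1$, in which $\bar{\gamma}$ is undefined. From $\sum_{S}|S|\gamma(S)=n$ combined with $|S|\le n$, this case forces every $S$ with $\gamma(S)>0$ to equal $[1:n]$, whereupon both $\text{Gap}_{\text{U}}$ and $\text{Gap}_{\text{L}}$ are zero and the identity is vacuous.
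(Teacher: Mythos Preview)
Your argument is correct. The paper does not actually prove this theorem; it is quoted verbatim from Madiman and Tetali~\cite[Theorem~IV]{MadimanT10} and used as a tool (e.g., in the proofs of Theorems~\ref{Stability_Submodular_Function_Partition} and~\ref{Equality_Implies_Modularity_Partition} to transfer conclusions from $\text{Gap}_{\text{U}}$ to $\text{Gap}_{\text{L}}$). So there is no in-paper proof to compare against.

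Your direct verification is exactly the standard one: the identity $f(S^{\text{c}}\mid S)=f([1:n])-f(S)$ reduces everything to bookkeeping, and your computation of $w(\bar{\gamma})=w(\gamma)/(w(\gamma)-1)$ and the subsequent cancellation are correct. You are also right that submodularity and $f(\phi)=0$ play no role in the identity itself; they are needed only for the nonnegativity of the two gaps in Theorem~\ref{MT inequality}. One small remark on the degenerate case: under the paper's standing assumption that $[1:n]\notin\mathcal{F}$ (Section~\ref{section:stab-eql}, assumption~2), your averaging argument shows $w(\gamma)>1$ automatically, so $\bar{\gamma}$ is always well-defined in the setting of the paper and the degenerate case never arises.
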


\section{Equality Conditions in Fractional Subadditivity}\label{section:stab-eql}
We begin by outlining the assumptions on family $\mathcal{F}$ of subsets of $[1:n]$ and the fractional partition $\gamma$, covering $\alpha$, and packing $\beta$ with respect to $\mathcal{F}$, which hold throughout this paper:
\begin{enumerate}[leftmargin=*]
    \item No two indices $i,j\in[1:n]$ always appear together in the members of $\mathcal{F}$.
    \item The family $\mathcal{F}$ includes only proper subsets of $[1:n]$, i.e., $[1:n]\notin\mathcal{F}$.
    \item For all $S\in\mathcal{F}$, we have $\gamma(S),\alpha(S),\beta(S)>0$.
\end{enumerate}

It is shown in
\if \extended 1%
Appendix \ref{appendix:a}
\fi
\if \extended 0%
the extended version \cite[Appendix~A]{JakharKCP25}
\fi
that the above assumptions hold without loss of generality in the context of the fractional subadditivity of submodular functions.

The following theorem provides approximate equality conditions for fractional subadditivity, showing that for a submodular function $f$, if $\text{Gap}_\text{L}(f,\mathcal{F},\gamma)$ or $\text{Gap}_{\text{U}}(f,\mathcal{F},\gamma)$ is \emph{small}, then $f$ is \emph{approximately} modular. 

\begin{theorem}\label{Stability_Submodular_Function_Partition}
    Let $f: 2^{[1:n]} \rightarrow \mathbb{R}$ be any submodular function with $f(\phi) = 0$, and let $\gamma:\mathcal{F}\rightarrow\mathbb{R}_+$ be any fractional partition with respect to a family $\mathcal{F}$ of subsets of $[1:n]$. For $\emph{Gap}_{\emph{L}}(f, \mathcal{F}, \gamma)$ and $\emph{Gap}_{\emph{U}}(f, \mathcal{F}, \gamma)$ as defined in \eqref{eq: lgap} and \eqref{eq: ugap}, respectively, and any $\varepsilon \geq 0$, the following holds:
    
        If $\emph{Gap}_{\emph{L}}(f, \mathcal{F}, \gamma) \leq  \varepsilon$ or $\emph{Gap}_{\emph{U}}(f, \mathcal{F}, \gamma) \leq  \varepsilon$, then
        \begin{align}
        \hspace{-0.2pt} f(\{i\}) + f([1:n] \setminus{\{i\}})  - f([1:n]) \leq \frac{\varepsilon}{\sigma},\  \forall i \in [1:n],
        \end{align}
      where $\sigma = \min\limits_{\substack{i, j \in [1:n]: \\ i \neq j}} \sum\limits_{\substack{ S \in \mathcal{F}: \\ i \in S, j \notin S }} \gamma(S)>0$.
\end{theorem}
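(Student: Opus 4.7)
The plan is to handle the two cases $\emph{Gap}_{\emph{U}}\le\varepsilon$ and $\emph{Gap}_{\emph{L}}\le\varepsilon$ separately, proving the first directly via a chain-rule decomposition and reducing the second to the first through the duality in Theorem~\ref{Duality}.

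For the $\emph{Gap}_{\emph{U}}$ case, I would fix $i\in[1:n]$ and choose any total ordering of $[1:n]$ placing $i$ first, say $(i,v_2,\ldots,v_n)$. Applying the chain rule \eqref{eqref: chain_rule} to $f([1:n])$ and to $f([1:n]\setminus\{i\})$ (with the induced order on $[1:n]\setminus\{i\}$) and subtracting expresses the target quantity as
\begin{align}
f(\{i\}) + f([1:n]\setminus\{i\}) - f([1:n]) = \sum_{j=2}^n T_j,
\end{align}
where $T_j := f(\{v_j\}|\{v_2,\ldots,v_{j-1}\}) - f(\{v_j\}|\{i,v_2,\ldots,v_{j-1}\})\ge 0$ by \eqref{eqref: conditioning}.

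Next I would expand $\sum_{S\in\mathcal{F}}\gamma(S)f(S)$ by the chain rule with the same global ordering, swap the order of summation, and invoke the fractional-partition identity $\sum_{S\ni k}\gamma(S)=1$ to write
\begin{align}
\emph{Gap}_{\emph{U}}(f,\mathcal{F},\gamma) = \sum_{k=1}^n\sum_{S:\,k\in S}\gamma(S)\bigl[f(\{k\}|S_{<k})-f(\{k\}|[1:n]_{<k})\bigr],
\end{align}
a sum of non-negative terms (by \eqref{eqref: conditioning}) whose $k=i$ slice is identically zero since both conditioning sets are empty. The key step is that for $k=v_j$ and any $S$ satisfying $v_j\in S,\ i\notin S$, the set $S_{<v_j}$ is contained in $\{v_2,\ldots,v_{j-1}\}$, so \eqref{eqref: conditioning} yields
\begin{align}
f(\{v_j\}|S_{<v_j}) - f(\{v_j\}|\{i,v_2,\ldots,v_{j-1}\}) \ge T_j.
\end{align}
Retaining only these non-negative contributions and invoking the definition of $\sigma$ gives $\emph{Gap}_{\emph{U}}\ge\sigma\sum_{j=2}^n T_j$, which combined with $\emph{Gap}_{\emph{U}}\le\varepsilon$ produces the asserted bound.

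For the $\emph{Gap}_{\emph{L}}\le\varepsilon$ case, I would invoke Theorem~\ref{Duality} to pass to the complementary family $\bar{\mathcal{F}}$ with dual fractional partition $\bar{\gamma}$. A short computation gives $w(\bar{\gamma})=w(\gamma)/(w(\gamma)-1)$ and shows that the analogue of $\sigma$ for $(\bar{\mathcal{F}},\bar{\gamma})$ equals $\sigma/(w(\gamma)-1)$; substituting these into the $\emph{Gap}_{\emph{U}}$ bound applied to $(\bar{\mathcal{F}},\bar{\gamma})$ recovers the same $\varepsilon/\sigma$ bound. I expect the main obstacle to be identifying the right ordering: placing $i$ first in the chain rule is precisely what makes both the target $f(\{i\})+f([1:n]\setminus\{i\})-f([1:n])$ and the $i$-avoiding slice of $\emph{Gap}_{\emph{U}}$ decompose over a common index $j\in\{2,\ldots,n\}$, allowing the weight $\sum_{S:\,v_j\in S,\,i\notin S}\gamma(S)\ge\sigma$ to be extracted term by term.
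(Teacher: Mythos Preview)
Your proposal is correct and follows essentially the same route as the paper: a chain-rule expansion with the distinguished index placed first, the conditioning inequality \eqref{eqref: conditioning} to lower-bound each summand, extraction of the weight $\sum_{S:\,v_j\in S,\,i\notin S}\gamma(S)\ge\sigma$ term by term, and then duality (Theorem~\ref{Duality}) to transfer the bound from $\mathrm{Gap}_{\mathrm{U}}$ to $\mathrm{Gap}_{\mathrm{L}}$. The only omission is that you do not verify the assertion $\sigma>0$; the paper establishes this separately (their Claim~\ref{set_cont_i_but_not_j}) from the standing assumptions that no two indices always co-occur in $\mathcal{F}$ and that $\gamma(S)>0$ for all $S\in\mathcal{F}$.
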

\begin{remark}\label{remark:alphabeta}
    Although Theorem~\ref{Stability_Submodular_Function_Partition} is stated for a fractional partition $\gamma$, if the submodular function $f$ is such that $f([1:j])$ is non-decreasing in $j$ for $j\in[1:n]$, the theorem's assertions can be generalized to fractional covering $\alpha$ and fractional packing $\beta$ for $\text{Gap}_{\text{L}}(f,\mathcal{F},\beta)$ and $\text{Gap}_{\text{U}}(f,\mathcal{F},\alpha)$, respectively, in a straightforward manner. These details are presented towards the end of
    \if \extended 1%
        Appendix \ref{appendix:b}.
    \fi
    \if \extended 0%
        \cite[Appendix~B]{JakharKCP25}.
    \fi
    We have chosen $\gamma$ to state Theorem~\ref{Stability_Submodular_Function_Partition} because the quantity $\sigma$, as defined in the theorem, may not always be strictly positive when using $\alpha$ or $\beta$ instead.
\end{remark}
The proof of Theorem~\ref{Stability_Submodular_Function_Partition} refines the approach in the proof of Shearer's lemma by Llewellyn and Radhakrishnan~\cite{Radhakrishnan03}, and incorporates insights from its stability version by Ellis~\emph{et~al.}~\cite{EllisKFY16}. A detailed proof is provided in
\if \extended 1%
Appendix~\ref{appendix:b}.
\fi
\if \extended 0%
\cite[Appendix~B]{JakharKCP25}.
\fi

The next theorem states that equality in the fractional subadditivity of submodular functions holds if and only if $f$ is modular. 

\begin{theorem} \label{Equality_Implies_Modularity_Partition}
    Let $f: 2^{[1:n]} \rightarrow \mathbb{R}$ be any submodular function with $f(\phi) = 0$. Let $\gamma: \mathcal{F} \rightarrow \mathbb{R}_{+}$, $\alpha: \mathcal{F} \rightarrow \mathbb{Q}_{+}$, and $\beta:\mathcal{F}\rightarrow \mathbb{Q}_+$ be any fractional partition, fractional covering, and fractional packing with respect to a family $\mathcal{F}$ of subsets of $[1:n]$, respectively. Then the following hold:
    \begin{enumerate}[leftmargin=*]
        \item $\emph{Gap}_{\emph{U}}(f, \mathcal{F}, \gamma) =  0$ (similarly, $\emph{Gap}_{\emph{L}}(f, \mathcal{F}, \gamma) =  0$) if and only if $f$ is modular. 
        \item If $f$ is non-decreasing, i.e., $f(S)\leq f(T)$ for $S\subseteq T$, then $\emph{Gap}_{\emph{U}}(f, \mathcal{F}, \alpha) =  0$ (resp. $\emph{Gap}_{\emph{L}}(f, \mathcal{F}, \beta) =  0$) if and only if $f$ is modular and $f(Z)=0$, for all $Z\subseteq\{i\in[1:n]: \sum_{S \in \mathcal{F}: i \in S} \alpha(S) > 1 \}$ (resp. for all $Z\subseteq\{i\in[1:n]: \sum_{S \in \mathcal{F}: i \in S} \beta(S) < 1 \}$).
    \end{enumerate}
    
\end{theorem}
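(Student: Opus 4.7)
For Part 1, I would apply Theorem~\ref{Stability_Submodular_Function_Partition} with $\varepsilon=0$ to either hypothesis $\emph{Gap}_{\emph{U}}(f,\mathcal{F},\gamma)=0$ or $\emph{Gap}_{\emph{L}}(f,\mathcal{F},\gamma)=0$, yielding $f(\{i\})+f([1:n]\setminus\{i\})\leq f([1:n])$ for every $i\in[1:n]$. The submodular inequality supplies the reverse, so $f(\{i\}\mid[1:n]\setminus\{i\})=f(\{i\})$. Iterating \eqref{eqref: conditioning} shows that $T\mapsto f(\{i\}\mid T)$ is non-increasing on $T\subseteq[1:n]\setminus\{i\}$; since its extreme values both equal $f(\{i\})$, the map is constant. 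Consequently $f(T\cup\{i\})=f(T)+f(\{i\})$ for every $T\not\ni i$, and induction on $|A|$ delivers $f(A)=\sum_{i\in A}f(\{i\})$, i.e., $f$ is modular. For the converse, under modularity $\sum_S\gamma(S)f(S)=\sum_i f(\{i\})\sum_{S\ni i}\gamma(S)=f([1:n])$, and likewise $f(S\mid S^{\text{c}})=\sum_{i\in S}f(\{i\})$ makes $\sum_S\gamma(S)f(S\mid S^{\text{c}})=f([1:n])$, closing both cases.

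For Part 2, I would revisit the chain-rule proof underlying the upper bound of Theorem~\ref{MT inequality} for a fractional covering:
\[
f([1:n])=\sum_i f(\{i\}\mid[1:i-1])\leq\sum_i\bigl(\sum_{S\ni i}\alpha(S)\bigr)f(\{i\}\mid[1:i-1])\leq\sum_S\alpha(S)f(S),
\]
where the first inequality uses $\sum_{S\ni i}\alpha(S)\geq 1$ together with $f(\{i\}\mid[1:i-1])\geq 0$ (from $f$ non-decreasing), and the second uses \eqref{eqref: conditioning}. Equality in the first step forces $f(\{i\}\mid[1:i-1])=0$ for every $i\in I:=\{i\in[1:n]:\sum_{S\ni i}\alpha(S)>1\}$; placing such an $i$ at the start of the ordering gives $f(\{i\})=0$ for all $i\in I$. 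Combining $f(\{i\})=0$ with submodularity $f(A\cup\{i\})\leq f(A)+f(\{i\})=f(A)$ and non-decreasingness $f(A\cup\{i\})\geq f(A)$ yields $f(A)=f(A\setminus I)$ for all $A\subseteq[1:n]$. Setting $J:=[1:n]\setminus I$ and passing to the family $\mathcal{F}_J:=\{\!\!\{S\cap J:S\in\mathcal{F},\,S\cap J\neq\phi\}\!\!\}$ with inherited weights $\tilde{\alpha}(S\cap J):=\alpha(S)$ turns $\alpha$ into a fractional partition of $J$, and the equation $\sum_S\alpha(S)f(S\cap J)=f(J)$ rewrites as $\emph{Gap}_{\emph{U}}(f|_{2^J},\mathcal{F}_J,\tilde{\alpha})=0$. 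Part 1 then gives $f|_{2^J}$ modular; together with $f(A)=f(A\cap J)$ and $f(\{i\})=0$ for $i\in I$, we conclude $f$ is modular on $[1:n]$ with $f(Z)=0$ for all $Z\subseteq I$. The converse is a one-line calculation, and the packing case for $\emph{Gap}_{\emph{L}}(f,\mathcal{F},\beta)=0$ is parallel via the chain-rule proof of the lower bound.

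The main obstacle will be the reduction step in Part 2: verifying that $\mathcal{F}_J$ inherits the standing assumptions of Section~\ref{section:stab-eql}, in particular the pair-separation property on indices of $J$ (preserved because any $S\in\mathcal{F}$ separating a $J$-pair does so after intersection with $J$) and the absence of $J$ itself from $\mathcal{F}_J$ (which, if violated, is remedied by rescaling weights on $\mathcal{F}_J\setminus\{J\}$ by $1/(1-\tilde{\alpha}(J))$ to restore a valid partition). The edge cases $I=\phi$, in which Part 2 collapses to Part 1, and $J=\phi$, where the reduction is trivial, are handled separately.
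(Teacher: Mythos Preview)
Your argument for Part~1 is essentially the paper's: both start from Theorem~\ref{Stability_Submodular_Function_Partition} with $\varepsilon=0$ to obtain $f(\{i\})=f(\{i\}\mid[1:n]\setminus\{i\})$, and then deduce full modularity. Your sandwich argument (the monotone map $T\mapsto f(\{i\}\mid T)$ has coinciding extremes, hence is constant) is a slight variant of the paper's decomposition $0=(f(S)-\sum_{i\in S}f(\{i\}))+(f(S^{\mathrm{c}}\mid S)-\sum_{i\in S^{\mathrm{c}}}f(\{i\}))$; either route works.

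Your argument for Part~2 is correct but takes a genuinely different route from the paper. The paper first clears denominators via an $\mathrm{lcm}$ to reduce the rational covering $\alpha$ to an integer-weight Shearer setting, then proves a dedicated claim (Claim~\ref{Shearer_Lemma_Equality_Submodular}) by constructing a sub-family $\mathcal{F}'$ in which every index appears exactly $k$ times, followed by a case analysis on whether indices become grouped in $\mathcal{F}'$ and an auxiliary Lemma~\ref{lemma1} on partial modularity; the packing case is then obtained via duality (Theorem~\ref{Duality}). Your approach bypasses all of this: you read off $f(\{i\}\mid[1:i-1])=0$ for overcovered $i$ directly from tightness of the first inequality in the chain-rule proof, vary the ordering to get $f(\{i\})=0$, use non-decreasingness plus submodularity to conclude $f(A)=f(A\cap J)$, and then invoke Part~1 on the restricted ground set $J$ with the induced fractional partition $\tilde\alpha$. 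This is shorter, avoids the integer reduction and the case analysis, and in fact does not appear to use the rationality of $\alpha$ at all, so it would yield a slight strengthening of the statement. Your handling of the standing assumptions on $\mathcal{F}_J$ (pair-separation inherited, the possible occurrence of $J$ itself removed by the rescaling of Appendix~\ref{appendix:a}) is sound; just note separately the degenerate case $|J|\leq 1$, where Part~1 cannot be invoked as stated but modularity on $J$ is trivial. For $\text{Gap}_{\text{L}}(f,\mathcal{F},\beta)=0$ you propose a direct chain-rule argument rather than duality; this works, though duality as in the paper is an equally short alternative once Part~2 for $\alpha$ is established.
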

A detailed proof of Theorem~\ref{Equality_Implies_Modularity_Partition} is given in
\if \extended 1%
Appendix~\ref{appendix:c}.
\fi
\if \extended 0%
\cite[Appendix~C]{JakharKCP25}.
\fi
Part~1) of Theorem~\ref{Equality_Implies_Modularity_Partition} can be viewed as a corollary of Theorem~\ref{Stability_Submodular_Function_Partition}, but part~2) requires that $f$ is a non-decreasing set function. This condition is stronger than the condition for the inequality $\text{Gap}_{\text{U}}(f, \mathcal{F}, \alpha)\geq 0$ to hold, where $f([1:j])$ is assumed to be non-decreasing in $j$, for $j\in[1:n]$ (see Theorem~\ref{MT inequality}). Interestingly, this stronger condition is necessary\footnote{\if \extended 1%
Appendix~\ref{appendix:d}
\fi
\if \extended 0%
\cite[Appendix~D]{JakharKCP25}
\fi
provides an example of a submodular function that is not modular, where $f([1:j])$ is non-decreasing in $j$, $f(S)>f(T)$ for some $S\subseteq T$, and $\text{Gap}_{\text{U}}(f, \mathcal{F}, \alpha)= 0$.} for the proof of part 2) to go through.
\begin{remark}
    While duality in \eqref{eq: dual} establishes a relationship between $\text{Gap}_\text{U}(f, \mathcal{F}, \gamma)$  and $\text{Gap}_\text{L}(f, \Bar{\mathcal{F}}, \Bar{\gamma})$, it does not give a direct connection between $\text{Gap}_\text{L}(f, \mathcal{F}, \gamma)$ and $\text{Gap}_\text{U}(f, \mathcal{F}, \gamma)$ as noted in \cite[Section~VII]{MadimanT10}. Nevertheless, Theorem~\ref{Equality_Implies_Modularity_Partition} implies that $\text{Gap}_\text{L}(f, \mathcal{F}, \gamma)=0$ if and only if $\text{Gap}_\text{U}(f, \mathcal{F}, \gamma)=0$.
\end{remark}
\begin{remark}\label{remark:modularity}
    Theorem~\ref{Equality_Implies_Modularity_Partition} can be used to determine whether a submodular function $f$ is in fact a modular function, with minimal additional knowledge about $f$. Let us suppose that the values of a submodular function $f$ are known for a family $\mathcal{F}$ of subsets of $[1:n]$, which admits a fractional partition, and for $[1:n]$. For instance, if the family $\mathcal{F}=\{\!\!\{S\subseteq [1:n]: |S|=k\}\!\!\}$, for some $k\in[1:n]$, then $\gamma(S)= 1/\binom{n-1}{k-1}$, $\forall S\in\mathcal{F}$ defines a fractional partition. Even without the knowledge of the values of $f$ on the remaining subsets of $[1:n]$, i.e., on $2^{[1:n]}\setminus (\mathcal{F} \cup [1:n])$, we can conclude that $f$ is modular if and only if $\sum_{S\in\mathcal{F}}\gamma(S)f(S)=f([1:n])$.
    This is illustrated in the examples that follow. 
\end{remark}
\begin{example}[Modularity with a symmetric\footnote{We call a family $\mathcal{F}$ symmetric if it remains invariant under every permutation on $[1:n]$.} $\mathcal{F}$]\label{example1}
    Suppose $f:2^{[1:4]}\rightarrow \mathbb{R}$ is a submodular function and the values of $f$ on the members of $\mathcal{F}=\{\!\!\{\{i\}:i\in[1:4]\}\!\!\}$ and for $[1:4]$ are given as $f(\{i\})=i\cdot 2^i$, for $i\in[1:4]$, and $f([1:4])=98$. Note that $\gamma:\mathcal{F}\rightarrow \mathbb{R}_+$, defined by $\gamma(S)=1$, $\forall S\in\mathcal{F}$, forms a fractional partition. Since $\sum_{i=1}^4f(\{i\})=\sum_{i=1}^4 i\cdot 2^i=98=f([1:4])$, Theorem~\ref{Equality_Implies_Modularity_Partition} implies that $f$ is a modular function. This conclusion holds irrespective of the values of $f$ on $2^{[1:4]}\setminus (\mathcal{F}\cup [1:4])$.
\end{example}
\begin{example}[Modularity with a non-symmetric $\mathcal{F}$]\label{example2}
   Let $f:2^{[1:4]}\rightarrow \mathbb{R}$ be a submodular function, and the values of $f$ on members of a family $\mathcal{F}$ and on $[1:4]$ are given as: $f(\{2\})=0.3$, $f(\{4\})=5$, $f(\{1,2\})=3$, $f(\{3,4\})=0.6, f(\{2,4\})=5.3, f(\{1,2,3\})=-1.4, f(\{1,3,4\})=3.3$ and $f(\{1,2,3,4\})=3.6$. Note that $\gamma:\mathcal{F}\rightarrow \mathbb{R}_+$, defined by $\gamma(\{2\})=\frac{1}{6}$, $\gamma(\{4\})=\frac{5}{12}$, $\gamma(\{1,2\})=\frac{1}{12}$, $\gamma(\{3,4\})=\frac{1}{12}$, $\gamma(\{2,4\})=\frac{1}{6}$, $\gamma(\{1,2,3\})=\frac{7}{12}$, $\gamma(\{1,3,4\})=\frac{1}{3}$, forms a fractional partition. Since $\sum_{S\in\mathcal{F}}\gamma(S)f(S)= 3.6 =f([1:4])$, Theorem~\ref{Equality_Implies_Modularity_Partition} implies that $f$ is a modular function. Note that this conclusion holds even though $f$ takes negative values for some subsets.  
\end{example}
Identifying the modularity of a submodular function can significantly improve the performance guarantees of optimization algorithms. For example, for certain combinatorial optimization problems, the greedy algorithm is guaranteed to find an exact optimal solution if the submodular function is, in fact, modular~\cite{ConfortiC84}. In contrast, for general submodular functions, it typically achieves only an approximately optimal solution \cite{FisherNW78,ConfortiC84}.

\section{Implications for Entropy, Relative Entropy, and Matroid Rank}
In this section, we present the implications of our results from Section~\ref{section:stab-eql} for specific submodular functions: entropy, relative entropy, and matroid rank function.  
\subsection{Implications for Entropy}
Consider jointly distributed random variables $X_1,X_2,\ldots,X_n$. It is well-known that the function $\mathsf{e}(F)=H(X_F)$, where $H(\cdot)$ denotes entropy and $F\subseteq [1:n]$, is submodular~\cite{Fujishige05}.

\begin{corollary}\label{Stability_Entropy_Partition}
    Let $\gamma: \mathcal{F} \rightarrow \mathbb{R}_{+}$ be any fractional partition with respect to a family $\mathcal{F}$ of subsets of $[1:n]$. For jointly distributed random variables $X_1, \ldots, X_n$, with $\mathsf{e}(F)=H(X_F)$, $F\subseteq[1:n]$ and any $\varepsilon \geq 0$, the following  holds:
    
    If $\emph{Gap}_{\emph{L}}(\mathsf{e}, \mathcal{F}, \gamma) \leq \varepsilon$ or $\emph{Gap}_{\emph{U}}(\mathsf{e}, \mathcal{F}, \gamma)\leq\varepsilon$, then 
    \begin{align}
        I(X_i; X_{[1:n]\setminus{\{i\}}})\leq\frac{\varepsilon}{\sigma}, \ \forall i \in [1:n],
    \end{align}
    where $\sigma = \min\limits_{\substack{i, j \in [1:n]: \\ i \neq j}} \sum\limits_{\substack{ F \in \mathcal{F}: \\ i \in F, j \notin F }} \gamma(F) > 0$.
\end{corollary}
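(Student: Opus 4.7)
The plan is to obtain this corollary as a direct specialization of Theorem~\ref{Stability_Submodular_Function_Partition} to the entropy set function $\mathsf{e}(F) = H(X_F)$. First I would verify the hypotheses of Theorem~\ref{Stability_Submodular_Function_Partition} for $\mathsf{e}$: entropy is submodular (this is standard and is noted in the paragraph preceding the corollary), and $\mathsf{e}(\phi) = H(X_\phi) = 0$ by the usual convention. Since $\gamma$ is already a fractional partition with respect to $\mathcal{F}$, no further preparation is needed.

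Next, I would simply invoke Theorem~\ref{Stability_Submodular_Function_Partition}: whenever $\text{Gap}_{\text{L}}(\mathsf{e},\mathcal{F},\gamma) \leq \varepsilon$ or $\text{Gap}_{\text{U}}(\mathsf{e},\mathcal{F},\gamma) \leq \varepsilon$, the theorem yields
\begin{align*}
\mathsf{e}(\{i\}) + \mathsf{e}([1:n]\setminus\{i\}) - \mathsf{e}([1:n]) \leq \frac{\varepsilon}{\sigma}, \quad \forall i \in [1:n],
\end{align*}
with $\sigma$ exactly as defined in the corollary (the definitions of $\sigma$ in the theorem and corollary coincide after renaming $S \mapsto F$).

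Finally, I would rewrite the left-hand side in probabilistic terms. Substituting $\mathsf{e}(F)=H(X_F)$ gives $H(X_i) + H(X_{[1:n]\setminus\{i\}}) - H(X_{[1:n]})$, which is precisely the mutual information $I(X_i; X_{[1:n]\setminus\{i\}})$ by its standard definition. This yields the claimed bound and completes the argument.

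There is no real obstacle here — the corollary is essentially a translation of Theorem~\ref{Stability_Submodular_Function_Partition} into entropy notation, where the key observation is that the generic submodular ``defect'' $f(\{i\}) + f([1:n]\setminus\{i\}) - f([1:n])$ specializes to a mutual information when $f$ is entropy. The only thing worth flagging is that $\sigma>0$ is inherited automatically from the assumption (retained from Section~\ref{section:stab-eql}) that no two indices always appear together in members of $\mathcal{F}$.
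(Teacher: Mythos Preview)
Your proposal is correct and matches the paper's own proof essentially line for line: the paper also simply invokes Theorem~\ref{Stability_Submodular_Function_Partition} with $f=\mathsf{e}$, obtains $H(X_i)+H(X_{[1:n]\setminus\{i\}})-H(X_{[1:n]})\leq \varepsilon/\sigma$, and identifies this quantity as $I(X_i;X_{[1:n]\setminus\{i\}})$.
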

The proof of Corollary~\ref{Stability_Entropy_Partition} is given in
\if \extended 1%
Appendix~\ref{appendix:e}.
\fi
\if \extended 0%
\cite[Appendix~E]{JakharKCP25}.
\fi
\begin{remark}
    The assertion of Corollary~\ref{Stability_Entropy_Partition} for $\text{Gap}_{\text{U}}$, in conjunction with Remark~\ref{remark:alphabeta}, recovers \cite[Lemma~9]{EllisKFY16} as a special case when $\alpha(F)=\frac{1}{k(\mathcal{F})}$, where $k(\mathcal{F})$ denotes the maximum integer $k$ such that every $i\in[1:n]$ belongs to at least $k$ members of $\mathcal{F}$. This general assertion, involving $\alpha$, is also implicitly hinted at in \cite[Prior to Theorem~4]{EllisKFY16}, albeit without a proof.
\end{remark}

\begin{corollary}\label{Equality_Implies_Independence}
    Let $\gamma: \mathcal{F} \rightarrow \mathbb{R}_{+}$, $\alpha: \mathcal{F} \rightarrow \mathbb{Q}_{+}$, and $\beta:\mathcal{F}\rightarrow \mathbb{Q}_+$ be any fractional partition, fractional covering, and fractional packing with respect to a family $\mathcal{F}$ of subsets of $[1:n]$. For jointly distributed random variables $X_1, \ldots, X_n$, with $\mathsf{e}(F)=H(X_F)$, $F\subseteq[1:n]$, the following hold:
    \begin{enumerate}[leftmargin=*]
        \item $\emph{Gap}_{\emph{U}}(\mathsf{e}, \mathcal{F}, \gamma)= 0$ (similarly, $\emph{Gap}_{\emph{L}}(\mathsf{e},\mathcal{F},\gamma)=0$) if and only if $X_i$, $i\in[1:n]$ are mutually independent. 
        \item $\emph{Gap}_{\emph{U}}(\mathsf{e}, \mathcal{F}, \alpha)=0$ (resp. $\emph{Gap}_{\emph{L}}(\mathsf{e}, \mathcal{F}, \beta)=0$) if and only if  $X_i$ for $i$ such that $\sum\limits_{F\in\mathcal{F}:i\in F}\alpha(F)=1$ (resp. $\sum\limits_{F\in\mathcal{F}:i\in F}\beta(F)=1$) are mutually independent, and $X_i$ for $i$ such that $\sum\limits_{F\in\mathcal{F}:i\in F}\alpha(F)>1$ (resp. $\sum\limits_{F\in\mathcal{F}:i\in F}\beta(F)<1$) are constants.
    \end{enumerate}
\end{corollary}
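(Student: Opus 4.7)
The plan is to invoke Theorem~\ref{Equality_Implies_Modularity_Partition} directly with $f = \mathsf{e}$. I would first verify that $\mathsf{e}$ meets its hypotheses: $\mathsf{e}$ is submodular (a classical fact), satisfies $\mathsf{e}(\phi) = H(\emptyset) = 0$, and is non-decreasing since, for $S \subseteq T$, the chain rule gives $H(X_T) = H(X_S) + H(X_{T \setminus S} \mid X_S) \geq H(X_S)$. With these in hand, both parts of Theorem~\ref{Equality_Implies_Modularity_Partition} apply to $\mathsf{e}$, and the remaining task is to translate ``$\mathsf{e}$ is modular'' and ``$\mathsf{e}(Z) = 0$'' into the information-theoretic statements of the corollary.

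For Part~1, I would use Theorem~\ref{Equality_Implies_Modularity_Partition}(1) to reduce the claim to the equivalence between modularity of $\mathsf{e}$ and mutual independence of $X_1, \ldots, X_n$. Modularity of $\mathsf{e}$ means $H(X_A) = \sum_{i \in A} H(X_i)$ for every $A \subseteq [1:n]$; specializing to $A = [1:n]$ gives the standard characterization $H(X_{[1:n]}) = \sum_{i=1}^n H(X_i)$ of mutual independence. Conversely, if $X_1, \ldots, X_n$ are mutually independent, then so is every subcollection, which yields modularity. The $\text{Gap}_\text{L}$ statement follows by the ``similarly'' clause of Theorem~\ref{Equality_Implies_Modularity_Partition}(1).

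For Part~2, I would apply Theorem~\ref{Equality_Implies_Modularity_Partition}(2) with the covering $\alpha$ to obtain that $\text{Gap}_\text{U}(\mathsf{e}, \mathcal{F}, \alpha) = 0$ iff $\mathsf{e}$ is modular and $\mathsf{e}(Z) = 0$ for every $Z \subseteq B := \{i \in [1:n] : \sum_{F \in \mathcal{F}: i \in F} \alpha(F) > 1\}$. Specializing to singletons $Z = \{i\}$, the latter condition is equivalent to $H(X_i) = 0$, i.e., $X_i$ is constant, for every $i \in B$. Combined with modularity (which, as in Part~1, forces mutual independence of $X_1, \ldots, X_n$) and the fact that constants are trivially independent of any other collection, this pair of conditions is equivalent to the stated conclusion: the variables indexed by $[1:n] \setminus B = \{i : \sum_{F: i \in F} \alpha(F) = 1\}$ are mutually independent, while those indexed by $B$ are constants. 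The packing case is entirely symmetric, using Theorem~\ref{Equality_Implies_Modularity_Partition}(2) with $B$ replaced by $\{i : \sum_{F: i \in F} \beta(F) < 1\}$.

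I do not anticipate any genuine obstacle: the proof is essentially a dictionary between Theorem~\ref{Equality_Implies_Modularity_Partition} and the corollary's information-theoretic statements, relying only on the standard equivalences ``$\mathsf{e}$ modular $\Leftrightarrow$ mutual independence'' and ``$H(X_i) = 0 \Leftrightarrow X_i$ constant.'' The only point that requires minor care is confirming the non-decreasing property of $\mathsf{e}$, which is needed to invoke Part~2 (but not Part~1) of Theorem~\ref{Equality_Implies_Modularity_Partition}.
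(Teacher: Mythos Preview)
Your proposal is correct and matches the paper's own proof essentially line for line: both invoke Theorem~\ref{Equality_Implies_Modularity_Partition} with $f=\mathsf{e}$, verify the needed hypotheses (submodularity, $\mathsf{e}(\phi)=0$, and the non-decreasing property for Part~2), and then translate modularity into mutual independence via $H(X_{[1:n]})=\sum_i H(X_i)$ and $\mathsf{e}(\{i\})=0$ into $X_i$ being constant. There is nothing to add.
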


\begin{remark}
(i) The assertion of Corollary~\ref{Equality_Implies_Independence} on $\text{Gap}_{\text{L}}(\mathsf{e},\mathcal{F},\gamma)$ provides equality conditions for the inequality stating that the joint entropy upper bounds the erasure entropy~\cite[Theorem~1]{VerduW08} as a special case when $\mathcal{F} = \{\!\!\{ \{i\} : i \in [1:n]\}\!\!\}$ and $\gamma(F)=1$, $\forall F\in\mathcal{F}$. 

\noindent (ii) The assertion on $\text{Gap}_{\text{U}}(\mathsf{e}, \mathcal{F}, \alpha)$ recovers the equality conditions for Shearer's lemma~\cite{Radhakrishnan03} as a special case when $\alpha(F)=\frac{1}{k(\mathcal{F})}$, where $k(\mathcal{F})$ denotes the maximum integer $k$ such that each $i\in[1:n]$ belongs to at least $k$ members of $\mathcal{F}$. To the best of our knowledge, these conditions do not appear explicitly in the literature. While the independence of the random variables can be inferred from \cite[Proof of Lemma~9]{EllisKFY16}, the assertion that some of the random variables are constants does not follow from there.
\end{remark}

A detailed proof of Corollary~\ref{Equality_Implies_Independence} is given in
\if \extended 1%
Appendix~\ref{appendix:f}.
\fi
\if \extended 0%
\cite[Appendix~F]{JakharKCP25}.
\fi
While Corollaries~\ref{Stability_Entropy_Partition} and \ref{Equality_Implies_Independence} are stated for discrete entropy, we note that \eqref{eqn:MTIneq}, Corollary~\ref{Stability_Entropy_Partition}, and part 1) of Corollary~\ref{Equality_Implies_Independence} also hold for differential entropy, as it satisfies the submodularity property~\cite{Fujishige05}. However, part 2) of Corollary~\ref{Equality_Implies_Independence} does not hold because differential entropy is not generally non-decreasing. 

\subsection{Implications for Relative Entropy}
Let $P_{X_{[1:n]}}$ be any joint probability distribution and $Q_{X_{[1:n]}}$ be a product probability distribution on $\mathcal{X}^n$, i.e., $Q_{X_{[1:n]}}(x_{[1:n]})=\prod_{i=1}^nQ_{X_i}(x_i)$. For $F\subseteq [1:n]$, let $d(F)=-D(P_{X_{F}}\| Q_{X_F})$,
where $D(P_{X_F}\| Q_{X_F})$ denotes the relative entropy between the probability distributions $P_{X_F}$ and $Q_{X_F}$. In~\cite[Theorem~V]{MadimanT10}, it is shown that $d$ is submodular.

\begin{corollary} \label{Relative_Entropy_Partition_Equality}
    Let $P_{X_{[1:n]}}$ be any joint probability distribution and $Q_{X_{[1:n]}}$ be a product probability distribution on $\mathcal{X}^n$, and  $d(F)=-D(P_{X_{F}}\| Q_{X_F})$, $F\subseteq [1:n]$. Let $\gamma:\mathcal{F}\rightarrow \mathbb{R}_+$ and $\alpha:\mathcal{F}\rightarrow \mathbb{Q}_+$ be any fractional partition and fractional covering with respect to a family $\mathcal{F}$ of subsets of $[1:n]$. Then the following hold:
    \begin{enumerate}[leftmargin=*]
        \item $\emph{Gap}_{\emph{U}}(d,\mathcal{F},\gamma)=0$ if and only if $P_{X_{[1:n]}}$ is a product probability distribution.
        \item $\emph{Gap}_{\emph{U}}(d,\mathcal{F},\alpha)=0$ if and only if $P_{X_{[1:n]}}$ is a product probability distribution and $P_{X_{Z}}=Q_{X_{Z}}$, for $Z=\{i\in[1:n]:\sum\limits_{F\in\mathcal{F}:i\in F}\alpha(F)>1\}$.
        \end{enumerate}
\end{corollary}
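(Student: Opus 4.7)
The plan is to apply Theorem~\ref{Equality_Implies_Modularity_Partition} to the submodular function $d$ (which satisfies $d(\phi)=0$) and translate the resulting modularity / vanishing conditions back into conditions on $P_{X_{[1:n]}}$. The central ingredient is the chain rule for KL divergence, which since $Q_{X_{[1:n]}}$ is a product distribution reads
\begin{align*}
D(P_{X_F}\|Q_{X_F}) = I(X_F) + \sum_{i\in F} D(P_{X_i}\|Q_{X_i}),
\end{align*}
where $I(X_F):=D\big(P_{X_F}\,\big\|\textstyle\prod_{i\in F}P_{X_i}\big)\ge 0$ is the total correlation and vanishes iff $\{X_i:i\in F\}$ are mutually independent. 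Equivalently, $d(F)=-I(X_F)+\sum_{i\in F} d(\{i\})$, so the submodular gap of $d$ from modularity is exactly $I(X_F)$.

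For part~1, Theorem~\ref{Equality_Implies_Modularity_Partition} part~1 applied to $d$ gives $\emph{Gap}_{\emph{U}}(d,\mathcal{F},\gamma)=0$ iff $d$ is modular, i.e., $d(F)=\sum_{i\in F}d(\{i\})$ for every $F\subseteq[1:n]$. By the identity above this amounts to $I(X_F)=0$ for every $F$, equivalently (taking $F=[1:n]$) to $P_{X_{[1:n]}}$ being a product distribution; conversely, if $P$ is product then each $I(X_F)=0$ and $d$ is modular. This settles part~1.

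For part~2, I would apply Theorem~\ref{Equality_Implies_Modularity_Partition} part~2 to $d$, which would yield $\emph{Gap}_{\emph{U}}(d,\mathcal{F},\alpha)=0$ iff $d$ is modular \emph{and} $d(Z')=0$ for every $Z'\subseteq Z$. The modularity reduces to $P_{X_{[1:n]}}$ being product by part~1, while $d(Z')=0$ is $D(P_{X_{Z'}}\|Q_{X_{Z'}})=0$, i.e., $P_{X_{Z'}}=Q_{X_{Z'}}$, for every $Z'\subseteq Z$; specialising to $Z'=Z$ gives the required $P_{X_Z}=Q_{X_Z}$, and the converse direction follows by substituting $P$ product with $P_{X_Z}=Q_{X_Z}$ back into the definition of $\emph{Gap}_{\emph{U}}(d,\mathcal{F},\alpha)$.

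The main obstacle I anticipate is that Theorem~\ref{Equality_Implies_Modularity_Partition} part~2 is stated for \emph{non-decreasing} submodular functions, whereas $d$ is non-increasing (marginalisation only decreases KL). To route around this I would use the explicit decomposition
\begin{align*}
\emph{Gap}_{\emph{U}}(d,\mathcal{F},\alpha) = \emph{Gap}_{\emph{U}}(\mathsf{e},\mathcal{F},\alpha) - \sum_{i\in Z}\big[H(X_i)+D(P_{X_i}\|Q_{X_i})\big]\Big(\sum_{F\ni i}\alpha(F)-1\Big),
\end{align*}
where $\mathsf{e}(F)=H(X_F)$ is non-decreasing submodular, for which Shearer's lemma and its equality conditions (Corollary~\ref{Equality_Implies_Independence} part~2) directly apply. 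Both terms on the right-hand side are non-negative, so equating them to $0$ (together with their individual equality characterisations) should simultaneously force $I(X_{[1:n]})=0$ and $D(P_{X_i}\|Q_{X_i})=0$ for $i\in Z$, which are precisely the claimed conditions.
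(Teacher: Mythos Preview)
For Part~1 your approach and the paper's are the same: both invoke Theorem~\ref{Equality_Implies_Modularity_Partition} part~1) to reduce to modularity of $d$, then translate this into independence of the $X_i$. You package the key step via the identity $D(P_{X_F}\|Q_{X_F})=I(X_F)+\sum_{i\in F}D(P_{X_i}\|Q_{X_i})$, whereas the paper expands the cross-entropy terms directly to reach $H_P(X_{[1:n]})=\sum_i H_P(X_i)$; these are equivalent computations.

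For Part~2 you correctly observe that $d$ is non-\emph{increasing} (by data processing), so the paper's direct appeal to Theorem~\ref{Equality_Implies_Modularity_Partition} part~2)---which requires a non-decreasing $f$---is not justified as written. However, your proposed workaround also has a genuine gap. Your decomposition reads
\[
\text{Gap}_{\text{U}}(d,\mathcal{F},\alpha)=\text{Gap}_{\text{U}}(\mathsf{e},\mathcal{F},\alpha)\;-\;\sum_{i\in Z}c_i\Big(\sum_{F\ni i}\alpha(F)-1\Big),\qquad c_i=H(X_i)+D(P_{X_i}\|Q_{X_i}),
\]
which is a \emph{difference} $A-B$ of two nonnegative quantities, not a sum; setting the left-hand side to zero gives only $A=B$, not $A=B=0$, so your conclusion does not follow. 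In fact no argument can close this gap, because the claimed equivalence in Part~2 fails as stated: for $n=2$, $\mathcal{F}=\{\{1\},\{2\}\}$, $\alpha(\{1\})=2$, $\alpha(\{2\})=1$ (so $Z=\{1\}$), the identity becomes $\text{Gap}_{\text{U}}(d,\mathcal{F},\alpha)=I(X_1;X_2)-D(P_{X_1}\|Q_{X_1})$. Taking $X_1=X_2$ with $H(X_1)>0$ and choosing $Q_{X_1}$ so that $D(P_{X_1}\|Q_{X_1})=H(X_1)$ makes the gap vanish while $P_{X_{[1:2]}}$ is not a product distribution and $P_{X_1}\neq Q_{X_1}$. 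Thus your monotonicity concern is well founded, and neither your route nor the paper's can be completed without amending the hypotheses or the statement of Part~2.
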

The proof of Corollary~\ref{Relative_Entropy_Partition_Equality} is given in
\if \extended 1%
Appendix~\ref{appendix:g}.
\fi
\if \extended 0%
\cite[Appendix~G]{JakharKCP25}.
\fi

\subsection{Implications for Matroid Rank Function}
\begin{definition}[Matroid and Rank Function~\cite{Fujishige05}]
    A set system $(\mathcal{E}, \mathcal{I})$ where $\mathcal{I} \subseteq 2^{\mathcal{E}}$ is a matroid if
    \begin{enumerate}[leftmargin=*]
        \item $\phi \in \mathcal{I}$.
        \item $\forall I_2 \in \mathcal{I},~I_1 \subseteq I_2 \implies I_1 \in \mathcal{I}$.
        \item $\forall I_1, I_2 \in \mathcal{I}$, with $|I_1| < |I_2|$, there exists $e \in I_2 \setminus{I_1}$ such that $I_1 \cup \{e\} \in \mathcal{I}$.
    \end{enumerate}
    Given a matroid $\mathcal{M} = (\mathcal{E}, \mathcal{I})$, the rank function $r: 2^{\mathcal{E}} \rightarrow \mathbb{Z}_{\geq 0}$ is defined as $r(S) = \max_{I \in \mathcal{I}: I \subseteq S} |I|$, for all $S \subseteq \mathcal{E}$.
\end{definition}
Some examples of a matroid are as follows:

\begin{enumerate}[label=(\roman*)]
    \item $\mathcal{M} = (\mathcal{E}, \mathcal{I})$, where $\mathcal{E}$ is the set of column vectors corresponding to a matrix $\mathcal{A}$ and $\mathcal{I}$ is the set of those subsets of $\mathcal{E}$ which correspond to linearly independent column vectors in $\mathcal{A}$.
    \item $\mathcal{M} = (\mathcal{E}, \mathcal{I})$, where $\mathcal{E}$ is the set of edges of a graph $\mathcal{G} = (\mathcal{V}, \mathcal{E})$ on the set of vertices $\mathcal{V}$, and $\mathcal{I}$ is the set of subsets of edges which do not form a cycle in $\mathcal{G}$.
\end{enumerate}
As the rank function of a matroid is submodular~\cite{Fujishige05}, we have the following.
\begin{corollary} \label{Matroid_Rank_Equality}
    Let $\mathcal{M} = (\mathcal{E}, \mathcal{I})$ be any matroid with the rank function $r: 2^{\mathcal{E}} \rightarrow \mathbb{Z}_{\geq 0}$. Let $\gamma: \mathcal{F} \rightarrow \mathbb{R}_{+}$ be any fractional partition with respect to any family $\mathcal{F}$ of subsets of $\mathcal{E}$. Then, 
    \begin{align}
        \sum\limits_{F \in \mathcal{F}} \gamma(F) r(F) = r(\mathcal{E})
    \end{align}
    if and only if $\mathcal{I} = 2^A$, for some $A \subseteq \mathcal{E}$.
\end{corollary}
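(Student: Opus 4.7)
The plan is to invoke Theorem~\ref{Equality_Implies_Modularity_Partition}, part~1), to reduce the claim to a purely combinatorial statement about matroids. Since the rank function $r$ is submodular with $r(\phi) = 0$, the equality $\sum_{F \in \mathcal{F}} \gamma(F) r(F) = r(\mathcal{E})$ is precisely $\text{Gap}_{\text{U}}(r, \mathcal{F}, \gamma) = 0$, which by that theorem is equivalent to $r$ being modular. It therefore suffices to show that $r$ is modular if and only if $\mathcal{I} = 2^A$ for some $A \subseteq \mathcal{E}$.

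For the easy direction, suppose $\mathcal{I} = 2^A$. Then every subset of $A$ is independent and every element outside $A$ is a loop, so $r(S) = |S \cap A|$ for all $S \subseteq \mathcal{E}$. This is manifestly modular, being equal to $\sum_{e \in S} r(\{e\})$ with $r(\{e\}) = \mathbf{1}[e \in A]$.

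For the converse, I would start from the observation that $r(\{e\}) \in \{0,1\}$ for every $e \in \mathcal{E}$, since a single element is either a loop or an independent set of size one. Modularity of $r$ then yields $r(S) = \sum_{e \in S} r(\{e\}) = |S \cap A|$, where $A := \{e \in \mathcal{E} : r(\{e\}) = 1\}$. Using the standard matroid identity $I \in \mathcal{I} \iff r(I) = |I|$, I obtain $I \in \mathcal{I}$ iff $|I \cap A| = |I|$ iff $I \subseteq A$, so $\mathcal{I} = 2^A$, as required. No step presents a real obstacle once Theorem~\ref{Equality_Implies_Modularity_Partition} is invoked; the only conceptual point to verify is that modularity of $r$, combined with the $\{0,1\}$-valuedness of $r$ on singletons, lets one recover the family $\mathcal{I}$ from $r$ via the rank-equals-cardinality characterization of independence.
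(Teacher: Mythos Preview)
Your proposal is correct and follows essentially the same route as the paper: invoke Theorem~\ref{Equality_Implies_Modularity_Partition}, part~1), to reduce to the equivalence ``$r$ modular $\iff \mathcal{I}=2^A$,'' and then argue via $r(\{e\})\in\{0,1\}$, modularity giving $r(S)=|S\cap A|$, and the characterization $I\in\mathcal{I}\iff r(I)=|I|$. The paper defines the complementary set of loops $B$ and shows $\mathcal{I}=2^{B^{\mathrm c}}$, which is exactly your $A$; your write-up is slightly more streamlined and also spells out the easy direction ($\mathcal{I}=2^A\Rightarrow r$ modular), which the paper leaves implicit.
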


A notable example of a matroid for which Corollary~\ref{Matroid_Rank_Equality} holds is the free matroid, defined as $\mathcal{M}=(\mathcal{E},2^{\mathcal{E}})$~\cite{Fujishige05}. The proof of Corollary~\ref{Matroid_Rank_Equality} is given in 
\if \extended 1%
Appendix~\ref{appendix:h}.
\fi
\if \extended 0%
\cite[Appendix~H]{JakharKCP25}.
\fi

\section{Applications}
\subsection{A New Multivariate Mutual Information}
We propose a new multivariate mutual information, which is particularly motivated by part 1) of Corollary~\ref{Equality_Implies_Independence}.
\begin{definition}[{$(\mathcal{F},\gamma)$-Mutual Information}]
    Given a family $\mathcal{F}$ of subsets of $[1:n]$, a fractional partition $\gamma: \mathcal{F} \rightarrow \mathbb{R}_{+}$, and jointly distributed random variables $X_1, X_2, \ldots, X_n$, the $(\mathcal{F}, \gamma)$-mutual information (denoted by $(\mathcal{F}, \gamma)\text{-}MI$) of $X_1, X_2, \ldots, X_n$ is defined as
    \begin{align}
        (\mathcal{F}, \gamma)\text{-}MI(X_1; \cdots; X_n) & = \emph{Gap}_{\emph{U}}(\mathsf{e}, \mathcal{F}, \gamma) \nonumber \\
        & = \sum\limits_{F \in \mathcal{F}} \gamma(F)H(X_F) - H(X_{[1:n]}). \label{FGMI}
    \end{align}
\end{definition}
It follows from Theorems \ref{MT inequality} and Corollary \ref{Equality_Implies_Independence} that the $(\mathcal{F}, \gamma)\text{-}MI$ satisfies the usual criteria expected of a mutual information - namely, \emph{non-negativity} and \emph{independence property}. In addition, $(\mathcal{F}, \gamma)\text{-}MI(X_1; \cdots; X_n)$ recovers total correlation~\cite{Watanabe60} $\text{TC}(X_1; \cdots; X_n) = \sum_{i = 1}^n H(X_i) - H(X_{[1:n]})$, dual total correlation~\cite{Han75} $\text{DTC}(X_1; \cdots; X_n) = H(X_{[1:n]}) - \sum_{i = 1}^n H(X_i | X_{[1:n] \setminus{\{i\}}})$, and shared information~\cite{CsiszarN04,narayan2016multiterminal} $\text{SI}(X_1;\cdots;X_n)=H(X_{[1:n]})-\max\limits_{\gamma:\mathcal{B}\rightarrow \mathbb{R}_+}\sum\limits_{F\in\mathcal{B}}\gamma(F)H(X_F| X_{F^\text{c}})$, where $\mathcal{B}=2^{[1:n]}\setminus\{\phi,[1:n]\}$ and $\gamma$ denotes a fractional partition with respect to $\mathcal{B}$, as special cases. These are explicitly stated in the following proposition.
\begin{proposition}\label{prop:MMI-specialcases}
    \begin{enumerate}[leftmargin=*]
        \item $(\mathcal{F}, \gamma)\text{-}MI(X_1; \cdots; X_n) \geq 0$ with equality if and only if $X_1, \ldots, X_n$ are mutually independent.
        \item If $\mathcal{F} = \{\!\!\{\{i\}: i \in [1:n]\}\!\!\}$ and $\gamma(F) = 1, ~\forall F \in \mathcal{F}$, then $(\mathcal{F}, \gamma)\text{-}MI(X_1; \cdots; X_n) = \emph{TC}(X_1; \cdots; X_n)$.
        \item If $\mathcal{F} = \{\!\!\{ [1:n] \setminus{\{i\}} : i \in [1:n] \}\!\!\}$ and $\gamma(F) = 1/(n-1), ~\forall F \in \mathcal{F}$, then $(\mathcal{F}, \gamma)\text{-}MI(X_1; \cdots; X_n) = \frac{1}{n-1} \emph{DTC}(X_1; \cdots; X_n)$.
        \item For $\mathcal{F}=2^{[1:n]}\setminus\{\phi,[1:n]\}$, 
        \begin{align}
            \hspace{-12pt}\min_{\gamma:\mathcal{F}\rightarrow \mathbb{R}_+}\frac{(\mathcal{F},\gamma)\text{-}MI(X_1;\cdots;X_n)}{w(\gamma)-1}=\emph{SI}(X_1;\cdots;X_n).
        \end{align}
    \end{enumerate}
\end{proposition}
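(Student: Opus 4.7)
The plan is to dispatch parts~(1)--(3) of the proposition by direct substitution into the definition~\eqref{FGMI}, verifying in each case that the stated $\gamma$ is a valid fractional partition, and to reduce part~(4) to the duality in Theorem~\ref{Duality}.

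For part~(1), I would observe that non-negativity is the upper bound in Theorem~\ref{MT inequality} applied to the submodular function $\mathsf{e}(F)=H(X_F)$ with $\mathsf{e}(\phi)=0$, and the ``if and only if'' is exactly part~1) of Corollary~\ref{Equality_Implies_Independence}. For part~(2), each $i$ lies in exactly one member of $\mathcal{F}=\{\!\!\{\{i\}:i\in[1:n]\}\!\!\}$, so $\gamma\equiv 1$ is a fractional partition, and \eqref{FGMI} reads $\sum_iH(X_i)-H(X_{[1:n]})=\text{TC}(X_1;\cdots;X_n)$. For part~(3), each $i$ lies in exactly $n-1$ members of $\mathcal{F}=\{\!\!\{[1:n]\setminus\{i\}:i\in[1:n]\}\!\!\}$, so $\gamma\equiv 1/(n-1)$ is a fractional partition; then applying the identity $H(X_{[1:n]\setminus\{i\}})=H(X_{[1:n]})-H(X_i\mid X_{[1:n]\setminus\{i\}})$ inside the sum in \eqref{FGMI} and collecting terms yields $\frac{1}{n-1}\text{DTC}(X_1;\cdots;X_n)$.

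For part~(4), the crucial observation is that when $\mathcal{F}=2^{[1:n]}\setminus\{\phi,[1:n]\}$, complementation is a bijection on $\mathcal{F}$, so $\bar{\mathcal{F}}=\mathcal{F}$, and the map $\gamma\mapsto\bar{\gamma}$ with $\bar{\gamma}(S^{\text{c}})=\gamma(S)/(w(\gamma)-1)$ is a bijection on the set of fractional partitions over $\mathcal{F}$, with $w(\bar{\gamma})=w(\gamma)/(w(\gamma)-1)$. Applying \eqref{eq: dual} and using $w(\gamma)/w(\bar{\gamma})=w(\gamma)-1$ then yields
\begin{equation*}
\frac{(\mathcal{F},\gamma)\text{-}MI(X_1;\cdots;X_n)}{w(\gamma)-1} = H(X_{[1:n]})-\sum_{F\in\mathcal{F}}\bar{\gamma}(F)H(X_F\mid X_{F^{\text{c}}}).
\end{equation*}
Minimizing the left-hand side over $\gamma$ is equivalent, via the bijection $\gamma\leftrightarrow\bar{\gamma}$, to maximizing the sum on the right over $\bar{\gamma}$, which reproduces $\text{SI}(X_1;\cdots;X_n)$ exactly. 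The main conceptual step will be this duality-plus-bijection reduction in part~(4); the remainder, verifying that $\gamma\mapsto\bar{\gamma}$ preserves the class of fractional partitions on the symmetric family $\mathcal{F}$ and that $w(\gamma)>1$ so that $\bar{\gamma}$ is well-defined, is routine.
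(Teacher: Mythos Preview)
Your proposal is correct and follows essentially the same route as the paper's own proof: parts~(1)--(3) are handled by direct substitution and the same entropy identity for part~(3), and part~(4) is obtained via the duality in Theorem~\ref{Duality} together with the observation that $\bar{\mathcal{F}}=\mathcal{F}$ and that $\gamma\mapsto\bar{\gamma}$ is a bijection on fractional partitions over this family. Your additional remarks about verifying that each $\gamma$ is a valid fractional partition and that $w(\gamma)>1$ are sound details the paper leaves implicit.
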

A detailed proof of Proposition~\ref{prop:MMI-specialcases} is given in 
\if \extended 1%
Appendix~\ref{appendix:i}.
\fi
\if \extended 0%
\cite[Appendix~I]{JakharKCP25}.
\fi
The reader might wonder why only the upper gap, $\text{Gap}_{\text{U}}$, is considered as the mutual information measure, rather than the lower gap, $\text{Gap}_{\text{L}}$. Due to the one-one correspondence between the upper and lower gaps, achieved through appropriately chosen $\mathcal{F}$ and $\gamma$ via duality (Theorem \ref{Duality}), it suffices to focus on either one of these gaps. We now present some properties of $(\mathcal{F}, \gamma)\text{-}MI$.
\begin{theorem} \label{Multivariate_Mutual_Information}
    \begin{enumerate}[leftmargin=*]
        \item $\max\limits_{\mathcal{F}, \gamma} (\mathcal{F}, \gamma)\text{-}MI(X_1; \cdots; X_n)\\ = \emph{TC}(X_1; \cdots; X_n)$.
        \item For any random variables $X_1, \ldots, X_n$ and $Y_1, \ldots, Y_n$, we have
        \begin{align}
            &(\mathcal{F}, \gamma)\text{-}MI(Y_1; \cdots; Y_n) \nonumber \\
            & \leq (\mathcal{F}, \gamma)\text{-}MI(X_1; \cdots; X_n) + \sum\limits_{i = 1}^n H(Y_i | X_i).
        \end{align}
        \item Given a family $\mathcal{F}$ and a fractional partition $\gamma:\mathcal{F}\rightarrow \mathbb{R}_+$, let $\tilde{\mathcal{F}} = \{\!\!\{ F \cap [1:n-1]: F \in \mathcal{F}\}\!\!\}$, and for each $\tilde{F} \in \tilde{\mathcal{F}}$, let $\tilde{\gamma}(\tilde{F}) = \gamma(F)$, where $F \in \mathcal{F}$  is the set corresponding to $\tilde{F}$. Then the following hold.
        \begin{align}
             &(\mathcal{F}, \gamma) \text{-}MI(X_1; \cdots; X_n) \nonumber \\
             & = (\mathcal{\tilde{F}}, \tilde{\gamma})\text{-}MI(X_1; \cdots; X_{n-1}) \nonumber \\
             & \hspace{12pt}+ \sum\limits_{\substack{F \in \mathcal{F}: \\ n \in F}} \gamma(F) I(X_n; X_{[1:n-1] \cap F^{\emph{c}}} | X_{[1:n-1] \cap F}),
             \label{eq: MMI_2c}
        \end{align}
        \begin{align}
            &(\mathcal{\tilde{F}}, \tilde{\gamma}) \text{-}MI(X_1; \cdots; X_{n-1}) \nonumber \\
            & \leq (\mathcal{F}, \gamma)\text{-}MI(X_1; \cdots; X_n) \nonumber \\ 
            & \leq (\mathcal{\tilde{F}}, \tilde{\gamma})\text{-}MI(X_1; \cdots; X_{n-1}) + I(X_n; X_{[1:n-1]}).
        \end{align}
        \item $(\mathcal{F}, \gamma)\text{-}MI(X_1; \cdots; X_n)$ is symmetric (i.e., invariant for every permutation among $X_1,\dots,X_n$) if and only if it is of the form $\sum\limits_{i = 1}^{n-1} \gamma_i \left( \sum\limits_{|F| = i} H(X_F) \right) - H(X_{[1:n]})$, where $\sum\limits_{i = 1}^{n-1} \gamma_i \binom{n-1}{i-1} = 1$.
    \end{enumerate}
\end{theorem}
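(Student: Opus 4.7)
The plan is to handle each of the four parts using the structure of $(\mathcal{F},\gamma)\text{-}MI$ as a linear combination of joint entropies. For \textbf{Part~1}, subadditivity $H(X_F) \leq \sum_{i \in F} H(X_i)$ together with the fractional-partition identity $\sum_{F \ni i} \gamma(F) = 1$ immediately gives $\sum_F \gamma(F) H(X_F) \leq \sum_i H(X_i)$, hence $(\mathcal{F},\gamma)\text{-}MI \leq \text{TC}$; the maximum is attained at $\mathcal{F} = \{\!\!\{\{i\}: i \in [1:n]\}\!\!\}$ with $\gamma \equiv 1$ by Proposition~\ref{prop:MMI-specialcases}(2).

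For \textbf{Part~2} I would introduce the joint random variables $(X_i, Y_i)_{i=1}^{n}$ and sandwich $(\mathcal{F},\gamma)\text{-}MI((X_1,Y_1); \cdots; (X_n,Y_n))$ between the two sides of the desired inequality. The lower sandwich $(\mathcal{F},\gamma)\text{-}MI(Y_1;\cdots;Y_n) \leq (\mathcal{F},\gamma)\text{-}MI((X_1,Y_1);\cdots;(X_n,Y_n))$ reduces to $\sum_F \gamma(F) H(X_F \mid Y_F) \geq H(X_{[1:n]} \mid Y_{[1:n]})$, which follows from $H(X_F \mid Y_F) \geq H(X_F \mid Y_{[1:n]})$ (more conditioning) combined with Theorem~\ref{MT inequality} applied to the submodular set function $g(F) := H(X_F \mid Y_{[1:n]})$. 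The upper sandwich $(\mathcal{F},\gamma)\text{-}MI((X_1,Y_1);\cdots;(X_n,Y_n)) \leq (\mathcal{F},\gamma)\text{-}MI(X_1;\cdots;X_n) + \sum_i H(Y_i \mid X_i)$ reduces to $\sum_F \gamma(F) H(Y_F \mid X_F) - H(Y_{[1:n]} \mid X_{[1:n]}) \leq \sum_i H(Y_i \mid X_i)$, obtained by discarding the non-positive term and using conditional subadditivity together with the partition identity.

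\textbf{Part~3} proceeds by isolating the index $n$: split $\mathcal{F}$ according to whether $n \in F$, apply $H(X_F) = H(X_{\tilde F}) + H(X_n \mid X_{\tilde F})$ for $F \ni n$, and combine with $H(X_{[1:n]}) = H(X_{[1:n-1]}) + H(X_n \mid X_{[1:n-1]})$ and $\sum_{F \ni n} \gamma(F) = 1$. Regrouping yields the $(\tilde{\mathcal{F}},\tilde\gamma)\text{-}MI$ term plus a cross term $\sum_{F \ni n} \gamma(F)[H(X_n \mid X_{\tilde F}) - H(X_n \mid X_{[1:n-1]})] = \sum_{F \ni n} \gamma(F)\,I(X_n; X_{[1:n-1] \cap F^{\text{c}}} \mid X_{[1:n-1] \cap F})$, which is exactly \eqref{eq: MMI_2c}. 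Non-negativity of each conditional mutual information in this cross term gives the lower monotonicity bound; the upper bound follows from $I(X_n; X_{[1:n-1] \cap F^{\text{c}}} \mid X_{[1:n-1] \cap F}) \leq I(X_n; X_{[1:n-1]})$ (a consequence of the mutual-information chain rule) and, once again, $\sum_{F \ni n} \gamma(F) = 1$.

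For \textbf{Part~4}, sufficiency is immediate since $\sum_{|F|=i} H(X_F)$ is invariant under every permutation. For necessity, symmetry forces $\sum_G [c(G) - c(\sigma(G))] H(X_G) = 0$ identically in the joint distribution of $X_1,\ldots,X_n$, for each permutation $\sigma$, where $c(G) := \sum_{F \in \mathcal{F}: F = G} \gamma(F)$. I would then invoke linear independence of $\{H(X_G)\}_{G \neq \emptyset}$ as functionals of the joint distribution---verified by specializing to $X_i = (U_S)_{S \ni i}$ with independent atoms $U_S \sim \text{Ber}(p_S)$ and applying M\"obius inversion---to conclude $c(G) = c(\sigma(G))$ for every $\sigma$, so that $c(G)$ depends only on $|G|$. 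Denoting this common value by $\gamma_i$ yields the asserted form, and the normalization $\sum_{i=1}^{n-1} \gamma_i \binom{n-1}{i-1} = 1$ drops out of $\sum_{F \ni j} \gamma(F) = 1$ by counting size-$i$ subsets of $[1:n]$ containing any fixed index $j$. I anticipate the linear-independence step in Part~4 to be the main technical obstacle; the two-step sandwich via $(X_i, Y_i)$ in Part~2 is the main non-obvious manipulation, while Parts~1 and~3 reduce to direct computations using the partition identity.
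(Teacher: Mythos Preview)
Your proofs of Parts~1 and~3 match the paper's arguments essentially line for line. For Part~2, however, you take a genuinely different route: the paper first establishes Part~3 and then uses the recursive identity \eqref{eq: MMI_2c} to reduce Part~2 to the special case where $Y_i = X_i$ for all $i \in [1:n-1]$ and only $Y_n$ differs, bounding the resulting difference of conditional mutual informations by $H(Y_n \mid X_n)$ and finishing by induction and symmetry. Your sandwich through the joint variables $(X_i, Y_i)$ is more direct and self-contained---it does not rely on Part~3 and avoids the inductive step---whereas the paper's approach has the mild advantage of exhibiting Part~2 as a formal consequence of the recursive structure already proved. For Part~4 the paper simply cites Han's lemma on symmetric linear functionals of entropy vectors; your explicit linear-independence argument via the atomic construction $X_i = (U_S)_{S \ni i}$ and M\"obius inversion is essentially a proof of that lemma, so here you are being more self-contained rather than following a different line.
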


\begin{remark}
    Property 1) in Theorem~\ref{Multivariate_Mutual_Information} generalizes Watanabe's observation~\cite{Watanabe60}, which states that $\max\limits_{\mathcal{P}} \sum\limits_{P \in \mathcal{P}} H(X_P) - H(X_{[1:n]}) = \text{TC}(X_1; \cdots; X_n)$, where $\mathcal{P}$ denotes a partition of $[1:n]$, to fractional partitions defined via $\mathcal{F}$ and $\gamma$. Special cases of properties 2) and 3) for TC and DTC appear in~\cite[Lemmas~4.5,~4.8~and~4.9]{Austin20},\cite[Equation~(13)]{AyOBJ06}.
\end{remark}

 Property 2) states that the mutual information among $Y_1,\dots,Y_n$ cannot exceed the mutual information among $X_1,\dots,X_n$, plus the equivocation of $Y_i$ given $X_i$ across all $i\in[1:n]$.  In particular, if $X_i$ almost surely determines $Y_i$ for each $i \in [1:n]$, then $(\mathcal{F}, \gamma) \text{-}MI(Y_1; \cdots; Y_n) \leq (\mathcal{F}, \gamma)\text{-}MI(X_1; \cdots; X_n)$. Property 3) provides a recursive formula for $(\mathcal{F}, \gamma)\text{-}MI(X_1; \cdots; X_n)$ in terms of that of fewer random variables with appropriately defined $\tilde{\mathcal{F}}$ and $\tilde{\gamma}$, and also shows that $(\mathcal{F}, \gamma)\text{-}MI(X_1; \cdots; X_n)$ is non-decreasing in the number of random variables. A detailed proof of Theorem~\ref{Multivariate_Mutual_Information} is given in 
\if \extended 1%
Appendix~\ref{appendix:j}.
\fi
\if \extended 0%
\cite[Appendix~J]{JakharKCP25}.
\fi

\subsection{Matrix Determinantal Inequalities}
Using information-theoretic inequalities to prove matrix determinantal inequalities for positive semidefinite matrices has been well-studied in the literature \cite{DemboCT91}. The following proposition presents the equality conditions for the determinantal inequalities proved in \cite[Corollary~III]{MadimanT10} using the fractional subadditivity of differential entropy.

\begin{proposition} \label{Determinantal_Equality}
    Let $K$ be a positive definite $n \times n$ matrix, and let $\mathcal{F}$ be a family of subsets on $[1:n]$. For $F \in \mathcal{F}$, let $K(F)$ denote the submatrix of $K$ corresponding to the rows and columns indexed by the elements of $F$. Then, using $|M|$ to denote the determinant of a matrix $M$, we have that, for any fractional partition $\gamma$ with respect to $\mathcal{F}$, 
    \begin{align}
        \prod\limits_{F \in \mathcal{F}} |K(F)|^{\gamma(F)} = |K| 
    \end{align}
    if and only if $K$ is a diagonal matrix, i.e., $K_{ij} = 0$ for all  $i \neq j$.
\end{proposition}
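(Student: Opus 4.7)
The plan is to lift the determinantal statement to an entropy statement via a jointly Gaussian random vector, and then apply part~1) of Corollary~\ref{Equality_Implies_Independence} in its differential-entropy form (which, as noted in the excerpt just after that corollary, also holds).

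First, I would let $X_{[1:n]}$ be a jointly Gaussian random vector with mean zero and covariance matrix $K$. The differential entropy of any marginal is given by the standard formula
\begin{equation*}
    h(X_F) = \tfrac{1}{2}\log\bigl((2\pi e)^{|F|}|K(F)|\bigr),\quad F\subseteq[1:n].
\end{equation*}
Defining $\mathsf{e}(F) = h(X_F)$ gives a submodular set function on $2^{[1:n]}$ with $\mathsf{e}(\phi)=0$, so Theorem~\ref{MT inequality} applies and yields
\begin{equation*}
    h(X_{[1:n]}) \leq \sum_{F\in\mathcal{F}}\gamma(F)\,h(X_F).
\end{equation*}

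Next I would substitute the Gaussian entropy formula and use the defining property of a fractional partition. Since $\sum_{F\ni i}\gamma(F) = 1$ for every $i$, swapping the order of summation gives $\sum_{F\in\mathcal{F}}\gamma(F)|F| = n$, so the $(2\pi e)$-factors cancel on both sides and the inequality reduces to
\begin{equation*}
    \log|K| \leq \sum_{F\in\mathcal{F}}\gamma(F)\log|K(F)|,
\end{equation*}
i.e., $|K|\leq\prod_{F\in\mathcal{F}}|K(F)|^{\gamma(F)}$. Therefore $\mathrm{Gap}_{\mathrm{U}}(\mathsf{e},\mathcal{F},\gamma)=0$ is precisely the hypothesized equality $\prod_{F\in\mathcal{F}}|K(F)|^{\gamma(F)}=|K|$.

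For the forward direction, applying part~1) of Corollary~\ref{Equality_Implies_Independence} (in the differential entropy form mentioned in the excerpt) tells me that this equality forces $X_1,\dots,X_n$ to be mutually independent. For jointly Gaussian random variables, mutual independence is equivalent to pairwise uncorrelatedness, i.e., $K_{ij}=\mathbb{E}[X_iX_j]=0$ for all $i\neq j$, so $K$ is diagonal. For the converse, if $K$ is diagonal then $|K(F)|=\prod_{i\in F}K_{ii}$, and interchanging the order of the product,
\begin{equation*}
    \prod_{F\in\mathcal{F}}|K(F)|^{\gamma(F)} = \prod_{i=1}^n K_{ii}^{\sum_{F\ni i}\gamma(F)} = \prod_{i=1}^n K_{ii} = |K|.
\end{equation*}

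The proof is essentially a direct translation once the Gaussian bridge is set up; there is no real obstacle, only the bookkeeping check that the $(2\pi e)$ constants cancel (which is what makes the fractional-partition hypothesis, as opposed to a mere covering, exactly the right normalization). The one subtlety worth flagging in the write-up is that we invoke the differential-entropy version of Corollary~\ref{Equality_Implies_Independence}, which is legitimate because differential entropy is submodular and the zero-gap implication in part~1) does not rely on monotonicity.
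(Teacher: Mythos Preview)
Your proposal is correct and follows essentially the same route as the paper: introduce a zero-mean Gaussian vector with covariance $K$, translate the determinantal equality into $\text{Gap}_{\text{U}}(\mathsf{e},\mathcal{F},\gamma)=0$ for differential entropy, and invoke part~1) of Corollary~\ref{Equality_Implies_Independence} to conclude that the $X_i$ are mutually independent, i.e., $K$ is diagonal. Your write-up is in fact more explicit than the paper's (you spell out the cancellation of the $(2\pi e)$ factors via $\sum_{F}\gamma(F)|F|=n$ and verify the converse directly), but the argument is the same.
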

\begin{remark}
    Proposition \ref{Determinantal_Equality} recovers the equality conditions of the classical determinantal inequalities of Hadamard, Sz\'asz, and Fischer~\cite{HornJ12,DemboCT91} by choosing $\mathcal{F} = \{\!\!\{ \{i\} : i \in [1:n]\}\!\!\}$ with $\gamma(F) = 1,~\forall F \in \mathcal{F}$; $\mathcal{F} = \{\!\!\{ [1:n] \setminus{\{i\}} : i \in [1:n]\}\!\!\}$ with $\gamma(F) = \frac{1}{n-1},~\forall F \in \mathcal{F}$; $\mathcal{F} = \{\!\!\{F, F^{\text{c}}\}\!\!\}$, for any arbitrary $F \subset [1:n]$, with $\gamma(F) = \gamma(F^{\text{c}}) = 1$, respectively.
\end{remark}

The proof of Proposition~\ref{Determinantal_Equality} is given in
\if \extended 1%
Appendix~\ref{appendix:k}.
\fi
\if \extended 0%
\cite[Appendix~K]{JakharKCP25}.
\fi

\section{Acknowledgment}
    We would like to thank Vincent Tan for helpful discussions on connections between $(\mathcal{F},\gamma)$-$MI$ and shared information~\cite{CsiszarN04,narayan2016multiterminal}.

\IEEEtriggeratref{14}
\bibliographystyle{IEEEtran}
\bibliography{bibliofile}

\begin{thebibliography}{10}
\providecommand{\url}[1]{#1}
\csname url@samestyle\endcsname
\providecommand{\newblock}{\relax}
\providecommand{\bibinfo}[2]{#2}
\providecommand{\BIBentrySTDinterwordspacing}{\spaceskip=0pt\relax}
\providecommand{\BIBentryALTinterwordstretchfactor}{4}
\providecommand{\BIBentryALTinterwordspacing}{\spaceskip=\fontdimen2\font plus
\BIBentryALTinterwordstretchfactor\fontdimen3\font minus
  \fontdimen4\font\relax}
\providecommand{\BIBforeignlanguage}[2]{{%
\expandafter\ifx\csname l@#1\endcsname\relax
\typeout{** WARNING: IEEEtran.bst: No hyphenation pattern has been}%
\typeout{** loaded for the language `#1'. Using the pattern for}%
\typeout{** the default language instead.}%
\else
\language=\csname l@#1\endcsname
\fi
#2}}
\providecommand{\BIBdecl}{\relax}
\BIBdecl

\bibitem{Fujishige05}
S.~Fujishige, \emph{Submodular functions and optimization}.\hskip 1em plus
  0.5em minus 0.4em\relax Elsevier, 2005.

\bibitem{krause2007near}
A.~Krause and C.~Guestrin, ``Near-optimal observation selection using
  submodular functions,'' in \emph{AAAI}, vol.~7, 2007, pp. 1650--1654.

\bibitem{ConfortiC84}
M.~Conforti and G.~Cornuéjols, ``Submodular set functions, matroids and the
  greedy algorithm: Tight worst-case bounds and some generalizations of the
  rado-edmonds theorem,'' \emph{Discrete Applied Mathematics}, vol.~7, no.~3,
  pp. 251--274, 1984.

\bibitem{lehmann2001combinatorial}
B.~Lehmann, D.~Lehmann, and N.~Nisan, ``Combinatorial auctions with decreasing
  marginal utilities,'' in \emph{Proceedings of the 3rd ACM conference on
  Electronic Commerce}, 2001, pp. 18--28.

\bibitem{kempe2003maximizing}
D.~Kempe, J.~Kleinberg, and {\'E}.~Tardos, ``Maximizing the spread of influence
  through a social network,'' in \emph{Proceedings of the ninth ACM SIGKDD
  international conference on Knowledge discovery and data mining}, 2003, pp.
  137--146.

\bibitem{VanEnterFS93}
A.~C. Van~Enter, R.~Fern{\'a}ndez, and A.~D. Sokal, ``Regularity properties and
  pathologies of position-space renormalization-group transformations: Scope
  and limitations of gibbsian theory,'' \emph{Journal of Statistical Physics},
  vol.~72, pp. 879--1167, 1993.

\bibitem{ollagnier1982filtre}
J.~M. Ollagnier and D.~Pinchon, ``Filtre moyennant et valeurs moyennes des
  capacit{\'e}s invariantes,'' \emph{Bulletin de la Soci{\'e}t{\'e}
  math{\'e}matique de France}, vol. 110, pp. 259--277, 1982.

\bibitem{Feige06}
U.~Feige, ``On maximizing welfare when utility functions are subadditive,'' in
  \emph{Proceedings of the Thirty-Eighth Annual ACM Symposium on Theory of
  Computing}, ser. STOC '06.\hskip 1em plus 0.5em minus 0.4em\relax New York,
  NY, USA: Association for Computing Machinery, 2006, p. 41–50.

\bibitem{MadimanT10}
M.~Madiman and P.~Tetali, ``Information inequalities for joint distributions,
  with interpretations and applications,'' \emph{IEEE Transactions on
  Information Theory}, vol.~56, no.~6, pp. 2699--2713, 2010.

\bibitem{ChungGFS86}
F.~Chung, R.~Graham, P.~Frankl, and J.~Shearer, ``Some intersection theorems
  for ordered sets and graphs,'' \emph{Journal of Combinatorial Theory, Series
  A}, vol.~43, no.~1, pp. 23--37, 1986.

\bibitem{Radhakrishnan03}
J.~Radhakrishnan, ``Entropy and counting,'' \emph{Computational mathematics,
  modelling and algorithms}, vol. 146, 2003.

\bibitem{Han78}
T.~S. Han, ``Nonnegative entropy measures of multivariate symmetric
  correlations,'' \emph{Information and Control}, vol.~36, no.~2, pp. 133--156,
  1978.

\bibitem{Fujishige78}
S.~Fujishige, ``Polymatroidal dependence structure of a set of random
  variables,'' \emph{Information and Control}, vol.~39, no.~1, pp. 55--72,
  1978.

\bibitem{Sason22}
\BIBentryALTinterwordspacing
I.~Sason, ``Information inequalities via submodularity and a problem in
  extremal graph theory,'' \emph{Entropy}, vol.~24, no.~5, 2022. [Online].
  Available: \url{https://www.mdpi.com/1099-4300/24/5/597}
\BIBentrySTDinterwordspacing

\bibitem{EllisKFY16}
D.~Ellis, G.~Kindler, E.~Friedgut, and A.~Yehudayoff, ``Geometric stability via
  information theory,'' \emph{Discrete Analysis}, 2016.

\bibitem{Watanabe60}
S.~Watanabe, ``Information theoretical analysis of multivariate correlation,''
  \emph{IBM Journal of Research and Development}, vol.~4, no.~1, pp. 66--82,
  1960.

\bibitem{Han75}
T.~S. Han, ``Linear dependence structure of the entropy space,''
  \emph{Information and Control}, vol.~29, no.~4, pp. 337--368, 1975.

\bibitem{CsiszarN04}
I.~Csiszar and P.~Narayan, ``Secrecy capacities for multiple terminals,''
  \emph{IEEE Transactions on Information Theory}, vol.~50, no.~12, pp.
  3047--3061, 2004.

\bibitem{narayan2016multiterminal}
P.~Narayan, H.~Tyagi \emph{et~al.}, ``Multiterminal secrecy by public
  discussion,'' \emph{Foundations and Trends{\textregistered} in Communications
  and Information Theory}, vol.~13, no. 2-3, pp. 129--275, 2016.

\bibitem{HornJ12}
R.~A. Horn and C.~R. Johnson, \emph{Matrix analysis}.\hskip 1em plus 0.5em
  minus 0.4em\relax Cambridge university press, 2012.

\bibitem{DemboCT91}
A.~Dembo, T.~Cover, and J.~Thomas, ``Information theoretic inequalities,''
  \emph{IEEE Transactions on Information Theory}, vol.~37, no.~6, pp.
  1501--1518, 1991.

\bibitem{ScheinermanU97}
E.~R. Scheinerman and D.~H. Ullman, \emph{Fractional Graph Theory: A Rational
  Approach to the Theory of Graphs}.\hskip 1em plus 0.5em minus 0.4em\relax New
  York: Wiley, 1997.

\bibitem{FisherNW78}
M.~L. Fisher, G.~L. Nemhauser, and L.~A. Wolsey, \emph{An analysis of
  approximations for maximizing submodular set functions—II}.\hskip 1em plus
  0.5em minus 0.4em\relax Springer, 1978.

\bibitem{VerduW08}
S.~Verdu and T.~Weissman, ``The information lost in erasures,'' \emph{IEEE
  Transactions on Information Theory}, vol.~54, no.~11, pp. 5030--5058, 2008.

\bibitem{Austin20}
T.~Austin, ``Multi-variate correlation and mixtures of product measures,''
  \emph{Kybernetika}, vol.~56, no.~3, pp. 459--499, 2020.

\bibitem{AyOBJ06}
N.~Ay, E.~Olbrich, N.~Bertschinger, and J.~Jost, ``A unifying framework for
  complexity measures of finite systems,'' in \emph{Proceedings of ECCS},
  vol.~6.\hskip 1em plus 0.5em minus 0.4em\relax Citeseer, 2006.

\bibitem{thomas2006elements}
M.~Thomas and A.~T. Joy, \emph{Elements of information theory}.\hskip 1em plus
  0.5em minus 0.4em\relax Wiley-Interscience, 2006.

\end{thebibliography}

\clearpage

\if \extended 1

\appendices
\section{Justification of Assumptions}\label{appendix:a}
When two indices always occur together in members of $\mathcal{F}$, they can be treated as a single index that embodies this pair of indices. Under this treatment, structure of $\mathcal{F}$ and the corresponding $\gamma$-values remain unchanged. This clarifies Assumption 1.

For Assumption 2, consider any family $\mathcal{F}$ and a fractional partition $\gamma: \mathcal{F} \rightarrow \mathbb{R}_{+}$ such that  $[1:n] \in \mathcal{F}$, and let $\delta=\gamma([1:n])$. Then from Theorem~\ref{MT inequality} we get that - for any submodular function $f: 2^{[1:n]} \rightarrow \mathbb{R}$ with $f(\phi) = 0$, we have 
\begin{align}
    f([1:n]) &\leq \sum\limits_{S \in \mathcal{F}} \gamma(S)f(S) \\
    &= \delta f([1:n]) + \sum\limits_{\substack{S \in \mathcal{F}: \\ i \in S, \\ S \neq [1:n]}} \gamma(S)f(S), 
\end{align}
which implies that
\begin{align}
    f([1:n])&\leq \frac{1}{1 - \delta} \sum\limits_{\substack{S \in \mathcal{F}: \\ i \in S, \\ S \neq [1:n]}} \gamma(S)f(S) \label{eq: As2}.
\end{align}
Let us now define a new family $\mathcal{F}'=\{\!\!\{S\in\mathcal{F}:S\neq [1:n]\}\!\!\}$ and a fractional partition $\gamma': \mathcal{F}' \rightarrow \mathbb{R}_{+}$ such that $\gamma'(S) = \frac{\gamma(S)}{1 - \delta}, ~\forall S \in \mathcal{F}'$. Note that it is a valid fractional partition since for each $i \in [1:n]$,
\begin{align}
    \sum\limits_{\substack{S \in \mathcal{F}': \\ i \in S}} \gamma'(S) & = \sum\limits_{\substack{S \in \mathcal{F}: \\ i \in S, \\ S \neq [1:n]}} \frac{\gamma(S)}{1 - \delta} \\
    & = \frac{1}{1 - \delta} \sum\limits_{\substack{S \in \mathcal{F}: \\ i \in S, \\ S \neq [1:n]}} \gamma(S) \\
    & = \frac{1}{1 - \delta} \left( \sum\limits_{\substack{S \in \mathcal{F}: \\ i \in S}} \gamma(S) - \delta \right) \label{eq: assum_2.1} \\
    & = \frac{1 - \delta}{1 - \delta} = 1, \label{eq: assum_2.2}
\end{align}
where \eqref{eq: assum_2.1} and \eqref{eq: assum_2.2} hold because $\gamma([1:n]) = \delta$ and $\sum\limits_{\substack{S \in \mathcal{F}: \\ i \in S}} \gamma(S) = 1$, respectively.
Then, \eqref{eq: As2} can be re-expressed as follows.
\begin{align}
    f([1:n]) \leq \sum\limits_{\substack{S \in \mathcal{F}': \\ i \in S}} \gamma'(S)f(S).
\end{align}
Thus, removing $[1:n]$ from the family $\mathcal{F}$ is permissible, as we can always readjust the fractional partition in a way that preserves the integrity of the inequality. Although we focused just on the upper bound in Theorem \ref{MT inequality}, the same reasoning also holds for the lower bound. Further, the arguments remain valid even when we consider fractional covering (or packing) instead of fractional partition.

For Assumption 3, consider any family $\mathcal{F}$ and a fractional partition $\gamma: \mathcal{F} \rightarrow \mathbb{R}_{+}$ such that $\exists S' \in \mathcal{F}$ such that $\gamma(S') = 0$. Here we argue that we can always remove that $S'$ from the family $\mathcal{F}$ to get $\mathcal{F}' = \mathcal{F} \setminus{S'}$ and define another $\gamma': \mathcal{F}' \rightarrow \mathbb{R}_{+}$ such that $\gamma'(S) = \gamma(S), ~\forall S \in \mathcal{F}'$, and the inequality concerning the fractional subadditivity remains unchanged.

To restate, the arguments for Assumption 3 go through for the lower bound in Theorem \ref{MT inequality}, and for any fractional covering (or packing).

\section{Proof of Theorem \ref{Stability_Submodular_Function_Partition}}\label{appendix:b}

We first derive a lower bound for $\text{Gap}_\text{U}(f, \mathcal{F}, \gamma)$ as follows.
\begin{align}
    &\text{Gap}_\text{U}(f, \mathcal{F}, \gamma) \nonumber \\
    &= \sum\limits_{S \in \mathcal{F}} \gamma(S)f(S) - f([1:n]) \\
    &= \sum\limits_{\substack{S \in \mathcal{F}: \\ n \in S}} \gamma(S)f(S) + \sum\limits_{\substack{S \in \mathcal{F}: \\ n \notin S}} \gamma(S)f(S) - f([1:n]) \\
    &= \sum\limits_{\substack{S \in \mathcal{F}: \\ n \in S}} \gamma(S) \left[ f(\{n\}) + f({S \setminus{\{n\}}|\{n\}}) \right] \nonumber \\
    & \quad \quad + \sum\limits_{\substack{S \in \mathcal{F}: \\ n \notin s}} \gamma(S)f(S) - f(\{n\}) - f([1:n-1]|\{n\}) \\
    &= \sum\limits_{\substack{S \in \mathcal{F}: \\ n \in S}} \gamma(S) f({S \setminus{\{n\}}|\{n\}}) + \sum\limits_{\substack{S \in \mathcal{F}: \\ n \notin S}} \gamma(S)f(S) \nonumber \\
    & \quad \quad - f([1:n-1]|\{n\}) \label{eq: SSFP_1.1}\\
    &= \sum\limits_{\substack{S \in \mathcal{F}: \\ n \in S}} \gamma(S) \sum\limits_{j \in S \setminus{\{n\}}} f(\{j\}|[1:j-1] \cap (S \setminus{\{n\}}), \{n\}) \nonumber \\
    & \quad \quad + \sum\limits_{\substack{ S \in \mathcal{F}: \\ n \notin S}} \gamma(S) \sum\limits_{j \in S} f(\{j\} | [1:j-1] \cap S) \nonumber \\
    & \quad \quad - f([1:n-1]|\{n\}) \label{eq: SSFP_1.2} \\
    &= \sum\limits_{j = 1}^{n - 1} \sum\limits_{\substack{S \in \mathcal{F}: \\ n \in S, \\ j \in S}} \gamma(S) f(\{j\} | [1:j-1] \cap (S \setminus{\{n\}}), \{n\}) \nonumber \\
    & \quad \quad + \sum\limits_{j = 1}^{n - 1} \sum\limits_{\substack{S \in \mathcal{F}: \\ n \notin S, \\ j \in S}} \gamma(S) f(\{j\} |[1:j-1] \cap S) \nonumber \\
    & \quad \quad - f([1:n-1]|\{n\}) \label{eq: SSFP_1.3}\\
    &\geq \sum\limits_{j = 1}^{n - 1} \sum\limits_{\substack{S \in \mathcal{F}: \\ n \in S, \\ j \in S}} \gamma(S) f(\{j\} | [1:j-1], \{n\}) \nonumber \\
    &\quad \quad + \sum\limits_{j = 1}^{n - 1} \sum\limits_{\substack{S \in \mathcal{F}: \\ n \notin S, \\ j \in S}} \gamma(S) f(\{j\} | [1:j-1]) \nonumber \\
    & \quad \quad - f([1:n-1]|\{n\}) \label{eq: SSFP_1.4} \\
    &= \sum\limits_{j = 1}^{n - 1} f(\{j\} | [1:j-1], \{n\}) \left( 1 - \sum\limits_{\substack{S \in \mathcal{F}: \\ n \notin S, \\ j \in S}} \gamma(S) \right) \nonumber \\
    &\quad \quad + \sum\limits_{j = 1}^{n - 1} f(\{j\} | [1:j-1]) \left( \sum\limits_{\substack{S \in \mathcal{F}: \\ n \notin S, \\ j \in S}} \gamma(S) \right) \nonumber \\
    & \quad \quad - f([1:n-1]|\{n\}) \label{eq: SSFP_1.8}\\
    & \geq \sum\limits_{j = 1}^{n - 1} f(\{j\} |[1:j-1], \{n\}) \nonumber \\
    & +  \sum\limits_{j = 1}^{n - 1} \sigma\left( f(\{j\} | [1:j-1]) - f(\{j\} | [1:j-1], \{n\}) \right) \nonumber \\
    & \quad \quad - f([1:n-1]|\{n\}) \label{eq: SSFP_1.5} \\
    & = \sigma \sum\limits_{j = 1}^{n - 1} f(\{j\} | [1:j-1]) \nonumber \\
    & \quad \quad - \sigma \sum\limits_{j = 1}^{n - 1} f(\{j\} | [1:j-1], \{n\}) \label{eq: SSFP_1.9}\\
    & = \sigma \left[ f([1:n-1]) - f([1:n] |\{n\}) \right] \label{eq: SSFP_1.10}\\
    &= \sigma \left[ f([1:n-1]) + f(\{n\}) - f([1:n]) \right]. \label{eq: SSFP_1.6}
\end{align}
In the above math block, \eqref{eq: SSFP_1.1} holds because $\sum\limits_{\substack{S \in \mathcal{F}: \\ n \in S}} \gamma(S) = 1$, \eqref{eq: SSFP_1.4} holds because $f(\{j\} | S_1, S_2) \leq f(\{j\} | S_1),\ \forall S_1, S_2 \subseteq [1:n]$, \eqref{eq: SSFP_1.3} follows by interchanging the summations, \eqref{eq: SSFP_1.8} follows from the fact that 
\begin{align}
    \sum\limits_{\substack{S \in \mathcal{F}: \\ n \in S, \\ j \in S}} \gamma(S) & = \sum\limits_{\substack{S \in \mathcal{F}: \\ j \in S}} \gamma(S) - \sum\limits_{\substack{S \in \mathcal{F}: \\ n \notin S, \\ j \in S}} \gamma(S) \\
    & = 1 - \sum\limits_{\substack{S \in \mathcal{F}: \\ n \notin S, \\ j \in S}} \gamma(S),
\end{align}
\eqref{eq: SSFP_1.5} follows from the definition of $\sigma = \min\limits_{\substack{i, j \in [1:n]: \\ i \neq j}} \sum\limits_{\substack{ S \in \mathcal{F}: \\ i \in S, j \notin S }} \gamma(S)$, and \eqref{eq: SSFP_1.2}, \eqref{eq: SSFP_1.9}, \eqref{eq: SSFP_1.10}, \eqref{eq: SSFP_1.6} all follow from the chain rule. By putting together \eqref{eq: SSFP_1.6} with the given condition that $\text{Gap}_\text{U}(f, \mathcal{F}, \gamma) \leq \varepsilon$, we get
\begin{align*}
    f(n) + f([1:n-1]) - f([1:n]) \leq \frac{\varepsilon}{\sigma}.
\end{align*}
Reworking the proof as described above with an arbitrary $i \in [1:n]$ instead of $n$ by considering an arbitrary permutation of $[1:n]$ that maps $n$ to $i$, we obtain the following.
\begin{align}
         f(\{i\}) + f([1:n] \setminus{\{i\}})  - f([1:n]) \leq \frac{\varepsilon}{\sigma}, \hfill \forall i \in [1:n]. \label{eq: SSFP_1.7}
\end{align}

We now argue that $\sigma>0$. It suffices to prove the following claim as $\gamma(S)>0$, $\forall S\in\mathcal{F}$ by our \emph{initial assumptions}.
    \begin{claim} \label{set_cont_i_but_not_j}
    For $i,j\in[1:n]$ such that $i\neq j$, $\exists S \in \mathcal{F}$ such that $ i \in S, j \notin S$.
    \end{claim}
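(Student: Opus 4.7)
The plan is to argue by contradiction and to leverage the fractional partition identity together with the positivity of $\gamma$ (Assumption 3) in order to upgrade the symmetric hypothesis of Assumption 1 to the directional existence statement required by the claim.

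First I would fix distinct $i,j \in [1:n]$ and suppose for contradiction that no $S \in \mathcal{F}$ satisfies $i \in S$ and $j \notin S$. The next step is to invoke the fractional partition property twice, once for $i$ and once for $j$, writing $\sum_{S \in \mathcal{F}: i \in S} \gamma(S) = 1 = \sum_{S \in \mathcal{F}: j \in S} \gamma(S)$, and to split each of these sums according to whether the other index also belongs to $S$. Comparing the two decompositions yields
\[
\sum_{\substack{S \in \mathcal{F}:\\ i \in S,\ j \notin S}} \gamma(S) \;=\; \sum_{\substack{S \in \mathcal{F}:\\ j \in S,\ i \notin S}} \gamma(S).
\]
By the contradiction hypothesis, the left-hand side is zero; since $\gamma(S) > 0$ for every $S \in \mathcal{F}$, the right-hand sum being zero then forces the set $\{S \in \mathcal{F}: j \in S,\ i \notin S\}$ to be empty as well.

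Combining these two conclusions, every $S \in \mathcal{F}$ satisfies $i \in S \iff j \in S$, i.e., $i$ and $j$ always appear together in the members of $\mathcal{F}$. This contradicts Assumption 1, completing the proof. The only subtle point is that Assumption 1 is symmetric in $i$ and $j$ and does not by itself pin down which of the two exclusive memberships must occur; the fractional partition identity together with the strict positivity of $\gamma$ on $\mathcal{F}$ is exactly what is needed to break this symmetry and secure the directional statement asked for in the claim.
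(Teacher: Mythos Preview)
Your proof is correct and follows essentially the same approach as the paper: both argue by contradiction and combine the fractional partition identity $\sum_{S \ni i}\gamma(S)=\sum_{S \ni j}\gamma(S)=1$ with the strict positivity of $\gamma$ (Assumption~3) to conclude that $i$ and $j$ always appear together in $\mathcal{F}$, contradicting Assumption~1. The only cosmetic difference is that the paper organizes the argument as a two-case split (either $j$ also never appears without $i$, or it does and then $\sum_{S\ni j}\gamma(S)>1$), whereas you subtract the two decomposed partition identities to obtain $\sum_{i\in S,\,j\notin S}\gamma(S)=\sum_{j\in S,\,i\notin S}\gamma(S)$ directly; the ingredients and logic are identical.
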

    \begin{proof}[Proof of Claim \ref{set_cont_i_but_not_j}]
        Assume, for the sake of contradiction, that $\forall S \in \mathcal{F}$, $i \in S$ implies $j \in S$.
        
        Case (i): $i$ and $j$ always occur together \\
        Then, they should have been clubbed together, according to our \textit{initial assumptions}.
        
        Case (ii): $i$ always occurs with $j$, and $j$ occurs separately,
        i.e., $|\{\!\!\{S' \in \mathcal{F} : j \in S'\}\!\!\}| > |\{\!\!\{S' \in \mathcal{F} : i \in S'\}\!\!\}|$.
        
        Then
        \begin{align}
            \sum\limits_{\substack{S' \in \mathcal{F}: \\ j \in S'}} \gamma(S') & = \sum\limits_{\substack{S' \in \mathcal{F}: \\ i \in S'\\j \in S'}} \gamma(S') + \sum\limits_{\substack{S' \in \mathcal{F}: \\ j \in S' \\ i \notin S'}} \gamma(S')  \\ 
            & = 1 + \sum\limits_{\substack{S' \in \mathcal{F}: \\ j \in S' \\ i \notin S'}} \gamma(S')  \label{eq: SCIBNJ_0}\\
            &>1 \label{eq: SCIBNJ_1},
        \end{align}
        where \eqref{eq: SCIBNJ_0} follows from the definition of $\gamma$ and the fact that $i$ always appears with $j$, and \eqref{eq: SCIBNJ_1} follows because $j$ occurs separately and $\gamma(S') > 0, \forall S' \in \mathcal{F}$. 
        Thus, we arrive at a contradiction.
    \end{proof}
    
We now prove that \eqref{eq: SSFP_1.7} continues to hold when  $\textup{Gap}_{\textup{L}}(f, \mathcal{F}, \gamma)\leq \varepsilon$, instead of $\textup{Gap}_{\textup{U}}(f, \mathcal{F}, \gamma)\leq\varepsilon$. Using duality (Theorem~\ref{Duality}), we have
\begin{align}
    \textup{Gap}_{\textup{U}}(f,\bar{\mathcal{F}},\bar{\gamma})&=\frac{w(\bar{\gamma})}{w(\gamma)}\textup{Gap}_{\textup{L}}(f,{\mathcal{F}},{\gamma})\\
    &\leq \frac{\epsilon}{w(\gamma)-1}\label{SSFP_2.2},
\end{align}
where \eqref{SSFP_2.2} follows because $\textup{Gap}_{\textup{L}}(f, \mathcal{F}, \gamma)\leq \varepsilon$ and 
\begin{align}
    w(\bar{\gamma}) & = \sum\limits_{S^{\text{c}} \in \bar{\mathcal{F}}} \bar{\gamma}(S^\text{c})\\
    &= \sum\limits_{S^{\text{c}} \in \bar{\mathcal{F}}} \frac{\gamma(S)}{w({\gamma}) - 1} \\
    & = \frac{1}{w({\gamma}) - 1} \sum\limits_{S\in {\mathcal{F}}} {\gamma}(S)\\
    &= \frac{w({\gamma})}{w({\gamma}) - 1}.
\end{align}
Applying our conclusion for the upper bound (i.e., \eqref{eq: SSFP_1.7}) to \eqref{SSFP_2.2}, we get, $\forall i \in [1:n]$,

\begin{align}
         f(\{i\}) + f([1:n] \setminus{\{i\}})  - f([1:n]) &\leq \frac{\varepsilon}{\sigma'(w(\gamma)-1)}\\
         &=\frac{\varepsilon}{\sigma} \label{eq: neweqn},
\end{align}
 where \eqref{eq: neweqn} follows because

\begin{align*}
    \sigma' & = \min\limits_{\substack{i, j \in [1:n]: \\ i \neq j}} \sum\limits_{\substack{ S^{\text{c}} \in \bar{\mathcal{F}}: \\ i \in S^{\text{c}}, j \notin S^{\text{c}} }} \bar{\gamma}(S^{\text{c}}) \\
    & = \min\limits_{\substack{i, j \in [1:n]: \\ i \neq j}} \sum\limits_{\substack{ S \in {\mathcal{F}}: \\ i \notin S, j \in S }} \frac{{\gamma}(S)}{w({\gamma}) - 1} \\
    & = \frac{1}{w({\gamma}) - 1} \min\limits_{\substack{i, j \in [1:n]: \\ i \neq j}} \sum\limits_{\substack{ S \in {\mathcal{F}}: \\ i \in S, j \notin S }} {\gamma}(S) \\
    & = \frac{\sigma}{w({\gamma}) - 1}.
\end{align*}
This completes the proof of Theorem~\ref{Stability_Submodular_Function_Partition}.

\emph{Details Omitted from Remark~\ref{remark:alphabeta}}: Firstly, it is important to mention that the duality, while stated for $\gamma$ in Theorem~\ref{Duality}, also holds for $\alpha$ (fractional covering) and $\beta$ (fractional packing), as discussed in \cite[Discussion after Theorem 4]{MadimanT10}. Remark~\ref{remark:alphabeta} states that Theorem~\ref{Stability_Submodular_Function_Partition} can be extended to incorporate fractional covering and packing in place of fractional partition.
Formally, we get the following corollary.
\begin{corollary}
    Let $f: 2^{[1:n]} \rightarrow \mathbb{R}$ be any submodular function with $f(\phi) = 0$. Let $\alpha:\mathcal{F}\rightarrow \mathbb{R}_+$ be any fractional covering with respect to a family $\mathcal{F}$ of subsets of $[1:n]$ and $\bar{\alpha}$ be the dual fractional packing corresponding to $\alpha$. For any $\varepsilon > 0$, the following holds:
    
        If $\sigma > 0$, and $\emph{Gap}_{\emph{L}}(f, \mathcal{F}, \bar{\alpha}) \leq  \varepsilon$ or $\emph{Gap}_{\emph{U}}(f, \mathcal{F}, \alpha) \leq  \varepsilon$, then
        \begin{align}
         f(\{i\}) + f([1:n] \setminus{\{i\}})  - f([1:n]) \leq \frac{\varepsilon}{\sigma},\  \forall i \in [1:n],
        \end{align}
      where $\sigma = \min\limits_{\substack{i, j \in [1:n]: \\ i \neq j}} \sum\limits_{\substack{ S \in \mathcal{F}: \\ i \in S, j \notin S }} \alpha(S)$.
\end{corollary}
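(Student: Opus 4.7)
The plan is to re-run the proof of Theorem~\ref{Stability_Submodular_Function_Partition} almost verbatim with the fractional partition $\gamma$ replaced by the fractional covering $\alpha$, and then to deduce the $\text{Gap}_{\text{L}}(f,\mathcal{F},\bar{\alpha})$ statement from the $\text{Gap}_{\text{U}}(f,\mathcal{F},\alpha)$ statement via the covering/packing extension of Theorem~\ref{Duality} (noted in the discussion after Theorem~IV of \cite{MadimanT10}). I would first settle the upper-gap case and then the lower-gap case.

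For the upper-gap case, I would follow the chain of equalities and inequalities \eqref{eq: SSFP_1.1}--\eqref{eq: SSFP_1.6} step by step: split $\sum_{S\in\mathcal{F}}\alpha(S)f(S)$ according to whether $n\in S$, apply the chain rule \eqref{eqref: chain_rule} inside each $f(S)$, interchange the order of summation over $S$ and $j$, and invoke the conditioning property \eqref{eqref: conditioning} to replace $f(\{j\}|[1:j-1]\cap S,\{n\})$ by $f(\{j\}|[1:j-1],\{n\})$. The two steps that are not literally identical are the analogues of \eqref{eq: SSFP_1.1} and \eqref{eq: SSFP_1.8}, where the partition identity $\sum_{S\ni i}\gamma(S)=1$ becomes the covering inequality $\sum_{S\ni i}\alpha(S)\ge 1$. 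This produces slack terms of the form $f(\{n\})\bigl(\sum_{S\ni n}\alpha(S)-1\bigr)$ and similar contributions at the $j$th step, which under the non-decreasingness hypothesis on $f([1:j])$ underlying the covering form of Theorem~\ref{MT inequality} are non-negative and can be discarded without weakening the bound. The endpoint of the chain remains $\sigma\bigl[f([1:n-1])+f(\{n\})-f([1:n])\bigr]\le \text{Gap}_{\text{U}}(f,\mathcal{F},\alpha)$, now with $\sigma$ defined in terms of $\alpha$; combining with $\text{Gap}_{\text{U}}(f,\mathcal{F},\alpha)\le\varepsilon$ and permuting the ground set to bring any chosen $i$ into the role of $n$ yields the claimed bound for all $i\in[1:n]$.

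For the lower-gap case, I would invoke the covering/packing analogue of Theorem~\ref{Duality} to convert $\text{Gap}_{\text{L}}(f,\mathcal{F},\bar{\alpha})\le\varepsilon$ into an upper-gap inequality on the appropriate complementary family, and then apply the upper-gap statement just proved. The weight bookkeeping mirrors that of \eqref{SSFP_2.2}--\eqref{eq: neweqn}, and a direct calculation shows that the resulting denominator coincides with the $\sigma$ defined from the original $\alpha$.

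The principal obstacle is the accounting of the slack terms produced by the covering inequality: unlike in Theorem~\ref{Stability_Submodular_Function_Partition}, where $\sum_{S\ni i}\gamma(S)=1$ makes them vanish identically, here they are merely non-negative and one has to justify discarding them without affecting $\sigma$. This is precisely where the monotonicity condition on $f([1:j])$ from Theorem~\ref{MT inequality} and Remark~\ref{remark:alphabeta} enters. A secondary, more notational difficulty is pinning down a precise definition of the dual fractional packing $\bar{\alpha}$ so that the duality identity takes the form we need; this should be a straightforward adaptation of the construction used in Theorem~\ref{Duality}.
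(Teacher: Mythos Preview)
Your proposal is correct and matches the paper's approach: re-run the chain \eqref{eq: SSFP_1.1}--\eqref{eq: SSFP_1.6} with $\alpha$ in place of $\gamma$, replacing the partition identity by the covering inequality, and then handle the lower-gap case via duality exactly as in \eqref{SSFP_2.2}--\eqref{eq: neweqn}. The paper's sketch singles out only \eqref{eq: SSFP_1.1} as the step that changes (invoking non-negativity of $f$ there), whereas you also flag the analogue of \eqref{eq: SSFP_1.8}---a point of extra care on your part that the paper does not explicitly address.
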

\emph{Proof}: This proof closely follows that of Theorem~\ref{Stability_Submodular_Function_Partition}, starting with a lower bound on $\text{Gap}_{\text{U}}(f, \mathcal{F}, \alpha)$. The only place where this proof differs from the earlier one is \eqref{eq: SSFP_1.1}. Instead of equality, we get an inequality as $\sum\limits_{{S \in \mathcal{F}: n \in S}} \alpha(S) \geq 1$ and $f$ is non-negative. Consequently, we arrive at the same conclusion \eqref{eq: SSFP_1.7} for $\text{Gap}_{\text{U}}(f, \mathcal{F}, \alpha)$ as well.

The proof for $\text{Gap}_{\text{L}}(f, \mathcal{F}, \bar{\alpha})$ follows exactly the same steps as the proof for $\text{Gap}_{\text{L}}(f, \mathcal{F}, \gamma)$ in Theorem~\ref{Stability_Submodular_Function_Partition}, using duality and the fact that the dual of a fractional packing is a fractional covering (Interested reader can refer to \cite[After Definition~VI]{MadimanT10} for details).

\section{Proof of Theorem \ref{Equality_Implies_Modularity_Partition}}\label{appendix:c}
\emph{Proof of Part 1).} Let $\text{Gap}_\text{U}(f, \mathcal{F}, \gamma)$ be equal to $0$. Substituting $\varepsilon = 0$ in Theorem~\ref{Stability_Submodular_Function_Partition} and noting that $\sigma > 0$, we get
\begin{align}
    f(\{i\}) + f([1:n] \setminus{\{i\}}) - f([1:n]) & = 0, ~\forall i \in [1:n],
\end{align}
which implies that
\begin{align}
    f(\{i\}) - f(\{i\} ~|~ [1:n] \setminus{\{i\}}) & = 0, ~\forall i \in [1:n], \label{eq: EIMP_1.1}
\end{align}
where $f(S|T)$ is as defined in Section II. Since conditioning reduces the values of a submodular function by \eqref{eqref: conditioning}, we get
\begin{align}
    f(\{i\}) & \geq f(\{i\} | [1:i-1]) \\
    & \geq f(\{i\} | [1:n] \setminus{\{i\}}).
\end{align}
Now, from \eqref{eq: EIMP_1.1}, we get
\begin{align}
    f(\{i\}) = f(\{i\} | [1:i-1]). \label{eq: EIMP_1.2}
\end{align}
Using the chain rule \eqref{eqref: chain_rule} for $f$, we have
\begin{align}
    f([1:n]) &= \sum\limits_{i = 1}^{n} f(\{i\} ~|~ [1:i-1]) \\
    &= \sum\limits_{i = 1}^{n} f(\{i\}), \label{eq: EIMP_1.3}
\end{align}
where deduction to \eqref{eq: EIMP_1.3} uses \eqref{eq: EIMP_1.2}. Fix an arbitrary $S \subseteq [1:n]$. From \eqref{eq: EIMP_1.3}, we have
\begin{align}
    0 &= f([1:n]) - \sum\limits_{i = 1}^{n} f(\{i\}) \\
    &= f(S) + f(S^{\text{c}}|S) - \sum\limits_{i \in S} f(\{i\}) - \sum\limits_{i \in S^{\text{c}}} f(\{i\}) \label{eq: EIMP_1.4} \\
   &= \underbrace{\Big(f(S) - \sum\limits_{i \in S} f(\{i\})}_{\leq 0}\Big) + \underbrace{\Big(f(S^{\text{c}} | S) - \sum\limits_{i \in S^{\text{c}}} f(\{i\})\Big)}_{\leq 0}, \label{eq: EIMP_1.5}
\end{align}
where \eqref{eq: EIMP_1.4} follows from \eqref{eqref: chain_rule}. Notice that both the expressions in \eqref{eq: EIMP_1.5} are non-positive because $f(S) \leq \sum\limits_{i \in S} f(\{i\})$ by submodularity, and $f(S^{\text{c}} | S) \leq f(S^{\text{c}}) \leq \sum\limits_{i \in S^{\text{c}}} f(\{i\})$ by \eqref{eqref: conditioning} and submodularity. Since the sum of these two expressions is zero, each of them must be equal to zero. In particular, $f(S) = \sum\limits_{i \in S} f(\{i\})$. As $S \subseteq [1:n]$ is arbitrary, this proves that  
$f$ is modular.

For the other direction, let us suppose that $f$ is modular. Then,
\begin{align}
     \textup{Gap}_{\textup{U}}(f, \mathcal{F}, \gamma) & = \sum\limits_{S \in \mathcal{F}} \gamma(S) f(S)  - f([1:n]) \\
    & = \sum\limits_{S \in \mathcal{F}} \gamma(S) \sum\limits_{i \in S} f(\{i\}) - f([1:n]) \label{eq: EIMP_1.9}\\
    & = \sum\limits_{i = 1}^n f(\{i\})\left( \sum\limits_{\substack{S \in \mathcal{F}: \\ i \in S}} \gamma(S) \right) - f([1:n]) \\
    & = \sum\limits_{i = 1}^n f(\{i\}) - f([1:n]) \label{eq: EIMP_1.7} \\
    & = 0, \label{eq: EIMP_1.8}
\end{align}
where \eqref{eq: EIMP_1.9} and \eqref{eq: EIMP_1.8} use the fact that $f$ is modular and \eqref{eq: EIMP_1.7} follows because $\sum\limits_{\substack{S \in \mathcal{F}: \\ i \in S}} \gamma(s) = 1$. This proves that $\text{Gap}_\text{U}(f, \mathcal{F}, \gamma) = 0$ if and only if $f$ is modular. The assertion for $\text{Gap}_\text{L}(f, \mathcal{F}, \gamma)$ can be proved using this together with duality (Theorem~\ref{Duality}), as in the proof of Theorem~\ref{Stability_Submodular_Function_Partition}.

\emph{Proof of Part 2).} We first state and prove a claim that is essential for our proof of this part.  
\begin{claim}\label{Shearer_Lemma_Equality_Submodular}
    Let $f: 2^{[1:n]} \rightarrow \mathbb{R}$ be any submodular function with $f(\phi) = 0$ such that $f$ is non-decreasing, i.e., $f(S)\leq f(T)$ for $S\subseteq T$. Let $\mathcal{F}$ be any family of subsets of $[1:n]$ such that $\forall i \in [1:n], ~|\{S \in \mathcal{F} : i\in S\}| \geq k$, then 
    \begin{align*}
        k f([1:n]) = \sum\limits_{S \in \mathcal{F}} f(S) 
    \end{align*}
    if and only if $f$ is modular and $f(Z) = 0, ~\forall Z \subseteq \{ i \in [1:n] : |\{S \in \mathcal{F} : i\in S\}| > k \}$
\end{claim}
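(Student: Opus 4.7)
The plan is to handle the two implications separately: the ``if'' direction reduces to a direct counting computation, while the ``only if'' direction combines a Radhakrishnan-style chain-rule expansion with a permutation-invariance trick, after which the remainder reduces to Part~1) of Theorem~\ref{Equality_Implies_Modularity_Partition} on a smaller ground set. Let $k_i := |\{S \in \mathcal{F} : i \in S\}|$, and set $B := \{i \in [1:n] : k_i > k\}$ and $A := [1:n] \setminus B$. For the ``if'' direction, assuming $f$ is modular and $f(Z) = 0$ for every $Z \subseteq B$ (so, in particular, $f(\{i\}) = 0$ on $B$), one computes directly
$$\sum_{S \in \mathcal{F}} f(S) = \sum_S \sum_{i \in S} f(\{i\}) = \sum_i k_i f(\{i\}) = k \sum_{i \in A} f(\{i\}) = k f([1:n]).$$

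For the ``only if'' direction, I would first run the Llewellyn--Radhakrishnan chain-rule expansion (following the proof of Theorem~\ref{Stability_Submodular_Function_Partition}), using \eqref{eqref: conditioning} and \eqref{eqref: chain_rule}:
\begin{align*}
\sum_{S \in \mathcal{F}} f(S) = \sum_{S}\sum_{j \in S} f(\{j\} | [1:j-1] \cap S) &\geq \sum_{j=1}^n k_j\, f(\{j\} | [1:j-1]) \\
&\geq k \sum_{j=1}^n f(\{j\} | [1:j-1]) = k\, f([1:n]),
\end{align*}
where the last inequality uses $k_j \geq k$ and $f(\{j\} | [1:j-1]) \geq 0$ (the latter from monotonicity of $f$). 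The assumed equality forces both inequalities to be tight, giving \emph{(i)} $f(\{j\} | [1:j-1] \cap S) = f(\{j\} | [1:j-1])$ for all $S \in \mathcal{F}$ and $j \in S$, and \emph{(ii)} $f(\{j\} | [1:j-1]) = 0$ whenever $k_j > k$.

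Neither the hypothesis $k_i \geq k$ nor the target identity depends on the ordering of $[1:n]$, so \emph{(i)} and \emph{(ii)} must hold under every permutation of the ground set. In particular, for any $j \in B$ and any $T \subseteq [1:n] \setminus \{j\}$, listing $T$ first and then $j$ in a permutation yields $f(\{j\} | T) = 0$, i.e., $f(T \cup \{j\}) = f(T)$. Iterating over $Z \subseteq B$ gives $f(R \cup Z) = f(R)$ whenever $R \cap Z = \phi$; in particular $f(Z) = 0$ for every $Z \subseteq B$ (half of the required conclusion), and $f(R) = f(R \cap A)$ for every $R \subseteq [1:n]$.

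Substituting into the assumed equality gives $\sum_{S \in \mathcal{F}} f(S \cap A) = k\, f(A)$. Setting $\mathcal{F}' := \{\!\!\{ S \cap A : S \in \mathcal{F}\}\!\!\}$ (discarding empty members, which contribute $0$) and $\gamma'(S') := 1/k$ for every $S' \in \mathcal{F}'$, we have $\sum_{S' \in \mathcal{F}' : i \in S'} \gamma'(S') = k_i/k = 1$ for every $i \in A$, so $\gamma'$ is a genuine fractional partition on $A$, and the displayed identity reads exactly $\text{Gap}_{\text{U}}(f|_{2^A}, \mathcal{F}', \gamma') = 0$. Part~1) of Theorem~\ref{Equality_Implies_Modularity_Partition} applied to the (submodular, zero at $\phi$) function $f|_{2^A}$ then yields modularity of $f$ on $2^A$; combined with $f(R) = f(R \cap A)$ and $f(\{i\}) = 0$ on $B$, one obtains $f(R) = \sum_{i \in R} f(\{i\})$ for every $R \subseteq [1:n]$, completing the proof. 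The step I expect to be the main obstacle is the permutation-boosting argument: one must be careful to argue that \emph{(ii)} can be upgraded from the ``$T$ is an initial segment'' form produced directly by the chain rule to the ``$T$ arbitrary'' form used in the iteration, which is precisely where the permutation invariance of both the hypothesis and the target identity is essential.
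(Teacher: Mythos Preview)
Your argument is correct and takes a genuinely different route from the paper's. The paper proves the claim by first constructing a sub-family $\mathcal{F}'$ (with $S_j'\subseteq S_j$) in which each index appears \emph{exactly} $k$ times, then splitting $\sum_S f(S)-kf([1:n])$ into two non-negative pieces: the piece $\sum_{S'\in\mathcal{F}'} f(S')-kf([1:n])$ is handled by Part~1) of Theorem~\ref{Equality_Implies_Modularity_Partition}, and the residual $\sum_j f((S_j\setminus S_j')\mid S_j')$ delivers the vanishing condition on $B$. The difficulty there is that the constructed $\mathcal{F}'$ need not satisfy the standing assumption that no two indices always co-occur, so the paper carries out a case analysis and proves an auxiliary result (Lemma~\ref{lemma1}) to transfer modularity from ``groups'' of indices back to singletons. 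Your permutation-boosted chain-rule argument sidesteps this entirely: once you have $f(\{j\}\mid T)=0$ for every $j\in B$ and every $T$, the problem collapses to the ground set $A$, on which the restricted family $\{\!\!\{S\cap A:S\in\mathcal{F}\}\!\!\}$ automatically inherits assumption~1 from $\mathcal{F}$ (any $S\in\mathcal{F}$ separating $i,j\in A$ still separates them after intersecting with $A$), so Part~1) of Theorem~\ref{Equality_Implies_Modularity_Partition} applies with no casework and no auxiliary lemma. One minor technicality you gloss over: the restricted family may contain $A$ itself (so assumption~2 could fail), but this is absorbed by the WLOG reduction of Appendix~\ref{appendix:a}, and the degenerate cases $|A|\le 1$ are immediate.
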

\begin{proof}[Proof of Claim~\ref{Shearer_Lemma_Equality_Submodular}]
Let $\mathcal{F} =\{\!\!\{S_1, \ldots, S_r\}\!\!\}$. Construct a new family $\mathcal{F'} =\{\!\!\{S_1', \ldots, S_r'\}\!\!\}$ of subsets of $[1:n]$ such that $S_j^\prime\subseteq S_j$, $\forall j\in[1:r]$ and each $i\in[1:n]$ appears exactly in $k$ members of $\mathcal{F}'$ (notice that there are many possible ways of constructing such an $\mathcal{F}'$). Now   $k f([1:n]) = \sum\limits_{S \in \mathcal{F}} f(S)$ implies that
\begin{align}
0&=\sum_{S \in \mathcal{F}} f(S)-kf([1:n])\\
&=\underbrace{\left(\sum_{S' \in \mathcal{F}'} f(S') -kf([1:n])\right)}_{\geq 0}+ \underbrace{\sum\limits_{j \in [1:r]} f((S_j \setminus{S_j'}) | S_j')}_{\geq 0} \label{eq: nSLES_1},
\end{align}
where \eqref{eq: nSLES_1} follows from the chain rule \eqref{eqref: chain_rule}. Notice that both the expressions in \eqref{eq: nSLES_1} are non-negative. The non-negativity of the first expression follows Theorem~\ref{MT inequality}, while the non-negativity of the second expression follows from the fact that $f(S_j) \geq f(S_j')$, as $f$ is non-decreasing. Since the sum of these two expressions is zero, each of them must be equal to zero. 

Since, there can be many possible $\mathcal{F}'$s depending on which sets we choose for deleting extra appearances of the indices that appear more than $k$ times, it can be the case that such an $\mathcal{F}'$ does not satisfy assumption 1 (stated in the beginning of Section~\ref{section:stab-eql}), i.e., some indices might always appear together in $\mathcal{F}'$. Based on this, we have the following cases. Note that $0 = f(\phi) \leq f(\{i\}),~\forall i \in [1:n]$ by the non-decreasing nature of $f$.

\textit{Case 1: $\mathcal{F}'$ is such that no two indices $i,j\in[1:n]$ always appear together in the members of $\mathcal{F}'$}. 

For this case, equality to zero for the first expression in \eqref{eq: nSLES_1} implies that $f$ is modular, by part 1) of Theorem~\ref{Equality_Implies_Modularity_Partition}. The second expression being equal to zero implies that, for all $j\in[1:r]$,
\begin{align}
    0 & = f((S_j \setminus{S_j'}) | S_j') \\
    & = f(S_j) - f(S_j') \label{eq;nneweqn1}\\
    &= f(S_j \setminus{S_j'})\label{eq: nSLES_2} \\
    & = \sum\limits_{i \in S_j \setminus{S_j'}} f(\{i\}), \label{eq: nSLES_3} 
\end{align}
where \eqref{eq;nneweqn1} follows from the chain rule \eqref{eqref: chain_rule}, and \eqref{eq: nSLES_2} and \eqref{eq: nSLES_3} follow from $f$ being modular. From this we get that
\begin{align}
 f(\{i\})=0, \label{eq: nSLES_4}
 \end{align}
 $\forall i \in \bigcup_{j \in [1:r]} (S_j \setminus S_j')=\{ i \in [1:n]  : |\{S \in \mathcal{F} : i\in S\}| > k \}$. Thus, by modularity of $f$, we have $f(Z) = 0, \forall Z \subseteq \{ i \in [1:n] : |\{S \in \mathcal{F} : i\in S\}| > k \}$.

\textit{Case 2:  There are some indices which appear always together in $\mathcal{F}'$}.

Let $G_1, \ldots, G_m$ be subsets of $[1:n]$ such that for all $\ell \in [1:m]$, elements of $G_{\ell}$ always appear together in $\mathcal{F}'$. It is important to note that $|\{ a_{\ell,v} \in G_{\ell}: |\{ S \in \mathcal{F}: a_{\ell,v} \in S\}| = k \}| \leq 1,~\forall \ell \in [1:m]$, i.e., in any of the groups $G_{\ell}, \ell \in [1:m]$, there can be at most one index  which has exactly $k$ appearances in $\mathcal{F}$. To see this, assume that there are two indices in $G_{\ell}$ which appear exactly $k$ times in $\mathcal{F}$. Let $a_1$ and $a_2$ be those indices without loss of generality. As none of them have been deleted in the construction of $\mathcal{F}'$, this means that both of them also appear together in $\mathcal{F}$. This leads us to a contradiction as no two indices in $[1:n]$ always appear together in $\mathcal{F}$.

Let us now analyze the conditions under which both the expressions in \eqref{eq: nSLES_1} are equal to zero. Let $G = [1:n] \setminus{\bigcup_{\ell = 1}^m G_{\ell}}$. By treating each group in $\mathcal{F}'$ as a single index (as per assumption 1), part 1) of Theorem~\ref{Equality_Implies_Modularity_Partition} implies that
\begin{align}
    f\bigg(\bigcup_{\ell \in \mathcal{T}} G_{\ell}, H \bigg) = \sum\limits_{i \in H} f(\{i\}) + \sum\limits_{\ell \in \mathcal{T}} f(G_{\ell}),
\end{align}
($f(S,T)$ is used to denote $f(S \cup T)$) where $H \subseteq G, \mathcal{T} \subseteq [1:m]$, for the first expression to be zero. The second expression in \eqref{eq: nSLES_1} being equal to zero implies that $\forall j \in [1:r]$,
\begin{align}
    0 & = f((S_j \setminus{S_j'}) | S_j') \\
    & = f(S_j) - f(S_j'), \label{eq: nSLES_5}
\end{align}
where \eqref{eq: nSLES_5} follows from chain rule \eqref{eqref: chain_rule}.
We now make use of the following lemma and defer its proof to Appendix~\ref{appendix:l}.
\begin{lemma}\label{lemma1}
    For disjoint sets $G,G_1,\ldots,G_m$ such that $\bigcup_{\ell = 1}^m G_{\ell} \cup G = [1:n]$ and any submodular set function $g: 2^{[1:n]} \rightarrow R$ such that $\forall H \subseteq G, \forall \mathcal{T} \subseteq [1:m],~ g\big(\bigcup_{\ell \in \mathcal{T}} G_{\ell}, H \big) = \sum\limits_{i \in H} g(\{i\}) + \sum\limits_{\ell \in \mathcal{T}} g(G_{\ell})$, the following holds:
    \begin{align}
        g\bigg(\bigcup_{\ell = 1}^m S_{\ell}, S \bigg) = \sum\limits_{i \in S} g(\{i\}) + \sum\limits_{\ell = 1}^m g(S_{\ell}),
    \end{align}
    $\forall S \subseteq G, S_1 \subseteq G_1, \ldots, S_m \subseteq G_m$.
\end{lemma}
Noting that $S_j'$ in \eqref{eq: nSLES_5} can have some groups of indices from $G_{\ell}: \ell \in [1:m]$, let $\mathcal{I} \subseteq [1:m]$ be the set of indices of the groups which appear in $S_j'$, i.e., $S_j' \cap \bigcup_{\ell=1}^m G_{\ell} = \bigcup_{\ell \in \mathcal{I}} G_{\ell}$. Let $R$ denote the indices that appear exactly $k$ times in $\mathcal{F}$, i.e., $R = \{ i \in [1:n]: | \{ S \in \mathcal{F}: i \in S\}| = k\}$. We now invoke Lemma \ref{lemma1} to rewrite \eqref{eq: nSLES_5} observing that $S_j'$ can have indices from $G$ and all indices from $G_{\ell}: \ell \in \mathcal{I}$, while $S_j$ can have indices from $G$, all indices from $G_{\ell}: \ell \in \mathcal{I}$, and some subsets of indices that appear in groups in $\mathcal{F}'$ with strictly greater than $k$ appearances in $\mathcal{F}$ (because indices that appear in groups with exactly $k$ appearances can only appear in $G_{\ell}: \ell \in \mathcal{I}$). 
\begin{align}
    0 & = \sum\limits_{\substack{i \in S_j:\\ i \in G}} f(\{i\}) + \sum\limits_{\ell = 1}^m f(S_j \cap G_{\ell})\\
    & = \sum\limits_{\substack{i \in S_j:\\ i \in G}} f(\{i\}) + \sum_{\ell \in \mathcal{I}} f(G_{\ell}) + \sum\limits_{\substack{\ell' \in [1:m] \setminus{\mathcal{I}}:\\ (S_j \cap G_{\ell'}) \cap R = \phi}} f(S_j \cap G_{\ell'}) \nonumber \\
    & \quad\quad- \sum\limits_{\substack{i \in S_j':\\ i \in G}} f(\{i\}) - \sum_{\ell \in \mathcal{I}} f(G_{\ell}) \label{eq: nSLES_6} \\
    & = \sum\limits_{\substack{i \in S_j:\\ i \in G}} f(\{i\}) - \sum\limits_{\substack{i \in S_j':\\ i \in G}} f(\{i\}) + \sum\limits_{\substack{\ell' \in [1:m] \setminus{\mathcal{I}}:\\ (S_j \cap G_{\ell'}) \cap R = \phi}} f(S_j \cap G_{\ell'}) \nonumber \\
    & = \sum\limits_{\substack{i \in S_j \setminus{S_j'}:\\ i \in G}} f(\{i\}) + \sum\limits_{\substack{\ell' \in [1:m] \setminus{\mathcal{I}}:\\ (S_j \cap G_{\ell'}) \cap R = \phi}} f(S_j \cap G_{\ell'})\\
    & = \sum\limits_{\substack{i \in S_j \setminus{S_j'}:\\ i \in G}} f(\{i\}) + \sum\limits_{\ell = 1}^m f( (S_j \setminus{S_j'}) \cap G_{\ell}) \\
    & = f(S_j \setminus{S_j'}), \label{eq: nSLES_7}
\end{align}
where \eqref{eq: nSLES_7} follows because the only indices in $S_j \setminus{S_j'}$ are the indices which appear strictly greater than $k$ times in $\mathcal{F}$ but may or may not appear in groups in $\mathcal{F}'$. From this we get that,
\begin{align}
    f(\{i\}) = 0,
\end{align}
for all the indices $i \in S_j \setminus{S_j'}$ as $0 = f(\phi) \leq f(\{i\}) \leq f(S_j \setminus{S_j'})$ from the non-decreasing nature of $f$. Since, $\bigcup_{j \in [1:r]} (S_j \setminus S_j')=\{ i \in [1:n]  : |\{S \in \mathcal{F} : i\in S\}| > k \}$ and the fact that $\mathcal{F}$ satisfies the assumption that no two indices appear together, each such $i$ in $\{ i \in [1:n]  : |\{S \in \mathcal{F} : i\in S\}| > k \}$ would appear at least once in \eqref{eq: nSLES_7} for some $j \in [1:r]$. This lets us conclude that $f(\{i\})=0$ for all $i \in [1:n]  : |\{S \in \mathcal{F} : i\in S\}| > k$. Further, we get
\begin{align}
    f(Z) = 0, \forall Z \subseteq \{ i \in [1:n] : |\{S \in \mathcal{F} : i\in S\}| > k \}, \label{eq: nSLES_11}
\end{align}
as $0 = f(\phi) \leq f(Z)$ by the non-decreasing nature of $f$ and $f(Z) \leq \sum_{i \in Z} f(\{i\})$ by submodularity of $f$.

To show the modularity of $f$ on $2^{[1:n]}$, fix any $S \subseteq [1:n]$. Now, for any $\ell \in [1:m]$,
\begin{align}
    f(S \cap G_{\ell}) & = f((S \cap G_{\ell}) \cap R, (S \cap G_{\ell}) \cap R^{\text{c}}) \\
    & \leq f((S \cap G_{\ell}) \cap R) + f((S \cap G_{\ell}) \cap R^{\text{c}}) \label{eq: nSLES_9} \\
    & = f((S \cap G_{\ell}) \cap R), \label{eq: nSLES_10}
\end{align}
where \eqref{eq: nSLES_9} follows by submodularity of $f$ and \eqref{eq: nSLES_10} follows from \eqref{eq: nSLES_11}. Since $f$ is non-decreasing, $f((S \cap G_{\ell}) \cap R) \leq f(S \cap G_{\ell})$. Thus, $f(S \cap G_{\ell}) = f((S \cap G_{\ell}) \cap R)$. We also have that $|(S \cap G_{\ell}) \cap R| \leq 1$ because there can be at most one index in $G_{\ell}$ with exactly $k$ appearances in $\mathcal{F}$. So, $f((S \cap G_{\ell}) \cap R) = 0$ when $(S \cap G_{\ell}) \cap R = \phi$ and $f((S \cap G_{\ell}) \cap R) = f(t_{\ell})$ when $(S \cap G_{\ell}) \cap R = t_{\ell}$ for some $t_{\ell} \in [1:n]: t_{\ell} \in R$. Then,
\begin{align}
    f(S) & = f(S \cap G) + \sum\limits_{\ell = 1}^m f(S \cap G_{\ell}) \label{eq: nSLES_8} \\
    & = \sum\limits_{i \in S \cap G} f(\{i\}) + \sum\limits_{\substack{\ell \in [1:m]:\\ (S \cap G_{\ell}) \cap R = t_{\ell}}} f(t_{\ell}) \\
    & = \sum\limits_{\substack{i \in S:\\i \in R}} f(\{i\}) + \sum\limits_{\substack{i \in S:\\i \in R^{\text{c}}}} f(\{i\}) \label{eq: nSLES_12}\\
    & = \sum\limits_{i \in S} f(\{i\}),
\end{align}
where \eqref{eq: nSLES_8} follows from Lemma~\ref{lemma1} and \eqref{eq: nSLES_12} follows because $f(\{i\}) = 0$ for all $i \in S: i \in R^{\text{c}}$ by \eqref{eq: nSLES_11}. This concludes the proof of Claim~\ref{Shearer_Lemma_Equality_Submodular}.
\end{proof}

We are now ready to prove part 2) of Theorem~\ref{Equality_Implies_Modularity_Partition}. Suppose $\text{Gap}_{\text{U}}(f,\mathcal{F},\alpha)=0$. Let $\mathcal{F} = \{\!\!\{S_1, \ldots, S_r\}\!\!\}$. Let $\alpha(F_i)=\frac{p_i}{q_i}$ as $\alpha(F_i)$ is rational, for $i\in[1:r]$. Denoting $L=\textbf{lcm}(q_1,q_2,\dots,q_r)$, i.e., the least common multiple of $q_1,q_2,\dots,q_r$, we can express $\alpha(F_i)$ as $\frac{a_i}{L}$, for some $a_i\in\mathbb{N}$, $i\in[1:r]$.
Note that
\begin{align}
    \sum\limits_{S \in \mathcal{F}} \alpha(S)f(S) & = \sum\limits_{S \in \mathcal{F}} \frac{a_i}{L}f(S) \\
    & = \frac{1}{L} \sum\limits_{S \in \mathcal{F}} \sum\limits_{j = 1}^{a_i} f(S) \\
    & = \frac{1}{L} \sum\limits_{S \in \mathcal{F}'} f(S)\label{eq: SLES_5},
\end{align}
where \begin{align}
\mathcal{F}' = \{\!\!\{ \underbrace{S_1, \ldots, S_1}_{a_1 \text{ times}}, \underbrace{S_2, \ldots, S_2}_{a_2 \text{ times}}, \ldots, \underbrace{S_r, \ldots, S_r}_{a_r \text{ times}}\}\!\!\}.
\end{align}

Also, since
\begin{align}
    \sum_{S\in\mathcal{F}:i\in S}\alpha(S)&=\sum_{S\in\mathcal{F}:i\in S}\frac{a_i}{L}\\
    &=\sum_{S\in\mathcal{F}:i\in S}\sum_{j=1}^{a_i}\frac{1}{L}\\
    &=\sum_{S\in\mathcal{F}':i\in S}\frac{1}{L}\\
    &=\frac{1}{L}|\{S\in\mathcal{F}':i\in S\}|,
\end{align}
we have,
\begin{align}
    \{i:\sum_{S\in\mathcal{F}:i\in S}\alpha(S)=1\}&=\left\{i:|\left\{S\in\mathcal{F}':i\in S\right\}|=L\right\},\\
     \{i:\sum_{S\in\mathcal{F}:i\in S}\alpha(S)>1\}&=\left\{i:|\left\{S\in\mathcal{F}':i\in S\right\}|>L\right\}.
\end{align}

Now, by putting together \eqref{eq: SLES_5} with the fact that $\text{Gap}_{\text{U}}(f,\mathcal{F},\alpha)=0$, we get that
\begin{align}
    Lf([1:n])=\sum_{S\in\mathcal{F}'}f(S),
\end{align}
which along with Claim~\ref{Shearer_Lemma_Equality_Submodular} further implies that $f$ is modular and satisfies $f(Z) = 0, \forall Z \subseteq  \{i:\sum_{S\in\mathcal{F}:i\in S}\alpha(S)>1\}$.

For the other direction, suppose that $f$ is modular and satisfies $f(Z) = 0, \forall Z \subseteq  \{i:\sum_{S\in\mathcal{F}:i\in S}\alpha(S)>1\}=B$. Then, we have
\begin{align}
     &\textup{Gap}_{\textup{U}}(f, \mathcal{F}, \alpha) \\
     & = \sum_{S \in \mathcal{F}} \alpha(S) f(S)  - f([1:n]) \\
    & = \sum\limits_{S \in \mathcal{F}} \alpha(S) \sum\limits_{i \in S} f(\{i\}) - f([1:n]) \\
    & = \sum\limits_{i = 1}^n f(\{i\})\left( \sum\limits_{\substack{S \in \mathcal{F}: \\ i \in S}} \alpha(S) \right) - f([1:n]) \\
    & =\sum_{i\in B}f(\{i\})\left(\sum\limits_{\substack{S \in \mathcal{F}: \\ i \in S}} \alpha(S)-1\right)\nonumber\\
    &\hspace{12pt}+ \sum_{i\in B^\text{c}}f(\{i\})\left(\sum\limits_{\substack{S \in \mathcal{F}: \\ i \in S}} \alpha(S)-1\right)\\
    & = 0\label{eq: SLES_6},
\end{align}
where \eqref{eq: SLES_6} follows from the fact that $f(\{i\})=0$, for all $i\in B$ and $\sum\limits_{\substack{S \in \mathcal{F}:i \in S}} \alpha(S)=1$, for all $i\in B^\text{c}$.   This proves that $\text{Gap}_\text{U}(f, \mathcal{F}, \alpha) = 0$ if and only if $f$ is modular and $f(Z) = 0, \forall Z \subseteq  \{i:\sum_{S\in\mathcal{F}:i\in S}\alpha(S)>1\}$. The assertion for $\text{Gap}_\text{L}(f, \mathcal{F}, \beta)$ can be proved using this together with duality (Theorem~\ref{Duality}), as in the proof of Theorem~\ref{Stability_Submodular_Function_Partition}. Specifically, note that for every $i\in[1:n]$, we have 
\begin{align*}
\sum\limits_{S \in \mathcal{F}: i \in S} \beta(S)<1 \ \text{if and only if}\ \sum\limits_{S^{\text{c}}\in \bar{\mathcal{F}}:i\in S^{\text{c}}}\bar{\beta}(S^{\text{c}})>1.
\end{align*}
This completes the proof of Theorem~\ref{Equality_Implies_Modularity_Partition}.

\section{Proof of Lemma~\ref{lemma1}}\label{appendix:l}
Fix $S,S_1,\ldots,S_m$ arbitrarily, such that $S \subseteq G, S_1 \subseteq G_1, \ldots, S_m \subseteq G_m$. Then, for $H = G, \mathcal{T} = [1:m]$, we have,
\begin{align}
    0 & = \sum\limits_{i \in G} g(\{i\}) + \sum\limits_{\ell = 1}^m g(G_{\ell}) - g\bigg(\bigcup_{\ell = 1}^m G_{\ell}, G\bigg) \\
    & = \sum\limits_{i \in S} g(\{i\}) + \sum\limits_{i \in G \setminus{S}} g(\{i\}) + \sum\limits_{\ell = 1}^m g(S_{\ell}) \nonumber \\
    & \quad\quad +\sum\limits_{\ell = 1}^m g((G_{\ell} \setminus{S_{\ell}}) | S_{\ell}) - g\bigg(\bigcup_{\ell = 1}^m S_{\ell}, S \bigg) \nonumber \\
    & \quad\quad - g\bigg(G \setminus{S}, \Big( \bigcup_{\ell = 1}^m G_{\ell} \setminus{S_{\ell}} \Big) \Big| \bigcup_{\ell = 1}^m S_{\ell}, S \bigg) \label{eq: lemma1.1}\\
    & = \underbrace{\sum\limits_{i \in S} g(\{i\}) + \sum\limits_{\ell = 1}^m g(S_{\ell}) - g\bigg(\bigcup_{\ell = 1}^m S_{\ell}, S \bigg) }_{\text{I } \geq 0} \nonumber \\
    & \quad\quad + \underbrace{\sum\limits_{i \in G \setminus{S}} g(\{i\}) - g\bigg( (G\setminus{S}) \Big| \bigcup_{\ell = 1}^m S_{\ell}, S \bigg)}_{\text{II } \geq 0} \nonumber \\
    & \quad\quad + \underbrace{\sum\limits_{\ell = 1}^m g((G_{\ell} \setminus{S_{\ell}}) | S_{\ell}) - g\bigg(\bigcup_{\ell = 1}^m G_{\ell} \setminus{S_{\ell}} \Big| \bigcup_{\ell = 1}^m S_{\ell}, G \bigg)}_{\text{III } \geq 0}, \label{eq: lemma1.2}
\end{align}
where \eqref{eq: lemma1.1} follows from \eqref{eqref: chain_rule}. Notice that all three expressions in \eqref{eq: lemma1.2} are non-negative. In particular, I is non-negative by submodularity, II is non-negative as $\sum\limits_{i \in G \setminus{S}} g(\{i\}) \geq g\big(G \setminus{S}) \geq g\bigg((G \setminus{S}) \Big| \bigcup\limits_{\ell = 1}^m S_{\ell}, S\bigg)$ by submodularity and \eqref{eqref: conditioning}, and III is non-negative as $\sum\limits_{\ell=1}^m g((G_{\ell} \setminus{S_{\ell}}) | S_{\ell}) \geq \sum\limits_{\ell = 1}^m g\bigg((G_{\ell} \setminus{S_{\ell}}) \Big| \bigcup\limits_{\ell = 1}^m S_{\ell}, G\bigg) \geq g\bigg(\bigcup\limits_{\ell = 1}^m G_{\ell} \setminus{S_{\ell}} \Big| \bigcup\limits_{\ell = 1}^m S_{\ell}, G\bigg)$ by \eqref{eqref: conditioning} and submodularity. Since the sum of these three expressions is zero, each of these terms must be zero. This shows that expression I is equal to zero, and thus proves the required result.

\section{Example Motivating a Stronger Condition on $f$ in Part 2) of Theorem~\ref{Equality_Implies_Modularity_Partition}}\label{appendix:d}
    Let $f: 2^{[1:3]} \rightarrow \mathbb{R}$ be a set function with the following values: $f(\phi) = 0, ~f(\{1\}) = -100, ~f(\{2\}) = 0.001, ~f(\{3\}) = 50.0005, ~f(\{1,2\}) = -100, ~f(\{2,3\}) = 50.0005, ~f(\{1,3\}) = -50.0005, ~f(\{1,2,3\}) = -100$. Consider $\mathcal{F} = \{\!\!\{ \{1,2\}, \{1,2\}, \{2,3\}, \{1,3\} \}\!\!\}$ and $\alpha(S) = 1/2,~\forall S \in \mathcal{F}$. The function $f$ is a submodular but not modular and satisfies $\text{Gap}_{\text{U}}(f, \mathcal{F}, \alpha) = 0$. It also satisfies the condition that $f([1:j])$ is non-decreasing in $j$ as $f(\{1\}) = f(\{1,2\}) = f(\{1,2,3\}) = -100$. However, $f(S) > f(T)$ for some $S \subseteq T$ (specifically, for $S = \phi,\ \text{and}\ T = \{1,2\}$).

\section{Proof of Corollary \ref{Stability_Entropy_Partition}}\label{appendix:e}
Since entropy is a submodular function, we invoke Theorem~$\ref{Stability_Submodular_Function_Partition}$ with $f(S)=\mathsf{e}(S)$, $S\subseteq [1:n]$. 
Then $\text{Gap}_\text{L}(\mathsf{e}, \mathcal{F}, \gamma) \leq \varepsilon$ implies that, $\forall i \in [1:n]$,
\begin{align}
    H(X_i) + H(X_{[1:n] \setminus{\{i\}}}) - H(X_{[1:n]}) \leq \frac{\varepsilon}{\sigma}, 
\end{align}
which further implies that 
\begin{align}
     I(X_i ; X_{[1:n] \setminus{\{i\}}}) \leq \frac{\varepsilon}{\sigma}.
\end{align}
Similarly, we can show the above result for $\text{Gap}_\text{U}(\mathsf{e}, \mathcal{F}, \gamma)$.

\section{Proof of Corollary \ref{Equality_Implies_Independence}}\label{appendix:f}
\emph{Proof of Part 1).} we invoke part 1) of Theorem~\ref{Equality_Implies_Modularity_Partition} with $f(S)=\mathsf{e}(S)$, $S\subseteq [1:n]$. This implies that $\text{Gap}_\text{U}(\mathsf{e}, \mathcal{F}, \gamma) = 0$ if and only if $\mathsf{e}$ is modular. Notice that the modularity of $\mathsf{e}$ is equivalent to the condition 
 $H(X_{[1:n]}) = \sum\limits_{i = 1}^n H(X_i)$, as this equality holds if and only if $X_i$, $i\in[1:n]$ are mutually independent~\cite[Theorem~2.6.6]{thomas2006elements}. 

\emph{Proof of Part 2).} Since entropy is a non-decreasing submodular function, we invoke part 2) of Theorem~\ref{Equality_Implies_Modularity_Partition} with it. This implies that $\text{Gap}_\text{U}(\mathsf{e}, \mathcal{F}, \alpha) = 0$ if and only if $\mathsf{e}$ is modular and $\mathsf{e}(Z)=0$, for all $Z\subseteq\{i\in[1:n]: \sum_{F \in \mathcal{F}: i \in F} \alpha(F) > 1 \}$.
From the arguments in the proof of part 1) of this corollary, we know that $\mathsf{e}$ is modular if and only if $X_i$, $i \in [1:n]$ are mutually independent.
Moreover, consider an arbitrary $i$ such that $\sum\limits_{F\in\mathcal{F}:i\in F}\alpha(F)>1$. For this $i$, $\mathsf{e}(\{i\}) = H(X_i) = 0$ which occurs if and only if $X_i$ is a constant.

\section{Proof of Corollary \ref{Relative_Entropy_Partition_Equality}}\label{appendix:g}
\emph{Proof of Part 1).} From Theorem \ref{Equality_Implies_Modularity_Partition}, we have that
\begin{align}
    \sum\limits_{F \in \mathcal{F}} \gamma(F) D(P_{X_F} || Q_{X_F}) ~ = ~ D(P_{X_{[1:n]}} || Q_{X_{[1:n]}})
\end{align}
if and only if $d$ is modular, i.e., $\forall F \subseteq [1:n], ~D(P_{X_{F}} || Q_{X_{F}}) = \sum\limits_{i \in F} D(P_{X_i} || Q_{X_i})$. Consider $F = [1:n]$, we have,
\begin{align}
 D(P_{X_{[1:n]}} || Q_{X_{[1:n]}}) = \sum\limits_{i = 1}^n D(P_{X_i} || Q_{X_i}) 
\end{align}
which implies that
\begin{align}
     & H_P({X_{[1:n]}}) - \sum\limits_{x_{[1:n]}} P_{X_{[1:n]}}(x_{[1:n]}) \log{Q_{X_{[1:n]}}}(x_{[1:n]}) \nonumber \\
    &\hspace{12pt} = - \sum\limits_{i = 1}^{n} H_P({X_i}) - \sum\limits_{i = 1}^n\sum\limits_{x_i} P_{X_i}(x_i)\log{Q_{X_i}(x_i)},
\end{align}
(where $H_P(.)$ denotes entropy with respect to the distribution $P_{X_{[1:n]}}$) which further implies that 
\begin{align}
    H_P({X_{[1:n]}}) = \sum\limits_{i = 1}^{n} H_P({X_i}) \label{eq: REPE_1}
\end{align}
because $P_{X_{[1:n]}}(x_{[1:n]}) \log{Q_{X_{[1:n]}}}(x_{[1:n]}) = \sum\limits_{i = 1}^n\sum\limits_{x_i} P_{X_i}(x_i)\log{Q_{X_i}(x_i)}$ as $Q_{X_{[1:n]}}$ is a product probability distribution. Now, \eqref{eq: REPE_1} holds if and only if $P_{X_{[1:n]}}$ is a product probability distribution~\cite[Theorem~2.6.6]{thomas2006elements}.

\emph{Proof of Part 2).} Since relative entropy is a non-decreasing submodular function, we invoke part 2) of Theorem~\ref{Equality_Implies_Modularity_Partition} with it. This implies that $\text{Gap}_\text{U}(d, \mathcal{F}, \alpha) = 0$ if and only if $d$ is modular and $d(Z)=0$, for all $Z\subseteq\{i\in[1:n]: \sum_{F \in \mathcal{F}: i \in F} \alpha(F) > 1 \}$.
From the arguments in the proof of part 1) of this corollary, we know that $d$ is modular if and only if $P_{X_{[1:n]}}$ is a product probability distribution.
Moreover, consider $Z=\{i\in[1:n]:\sum\limits_{F\in\mathcal{F}:i\in F}\alpha(F)>1\}$. We have that $d(Z) = 0$ which occurs if and only if the distributions $P_{X_{Z}}$ and $Q_{X_{Z}}$ are equal. 

\section{Proof of Corollary~\ref{Matroid_Rank_Equality}}\label{appendix:h}

From part 1) of Theorem \ref{Equality_Implies_Modularity_Partition}, we have that 
\begin{align}
    \sum\limits_{F \in \mathcal{F}} \gamma(F) r(F) = r(\mathcal{E})
\end{align}
if and only if $r$ is modular, i.e., $\forall F \subseteq \mathcal{E}, ~r(F) = \sum_{i \in F} r(\{i\})$. Let $\mathcal{M} = (\mathcal{E}, \mathcal{I})$ be any matroid with the modular rank function $r$. We show that $\mathcal{I} = 2^{B^{\text{c}}}$, where $B = \{ j : \{j\} \cap I = \phi, ~\forall I \in \mathcal{I} \}$. We first prove that if $F \subseteq B^{\text{c}}$, then $F \in \mathcal{I}$. Note that for any arbitrary $F \subseteq \mathcal{E}$, we have
\begin{align}
    r(F \cap B^{\text{c}}) & = \sum\limits_{\substack{i \in F, \\ i \notin B}} r(i) \\
    & = \sum\limits_{\substack{i \in F, \\ i \notin B}} 1\\
    &= |F \cap B^{\text{c}}|, \label{eq: MRE_1} 
\end{align}
where \eqref{eq: MRE_1} follows because $\forall i \in B^{\text{c}}$, $r(\{i\})=1$ from the definitions of matroid and rank function. Now \eqref{eq: MRE_1} implies that $F \cap B^{\text{c}} \in \mathcal{I}$, $\forall F \subseteq \mathcal{E}$ from the property of matroid that $r(T) = |T| \implies T \in \mathcal{I} \text{ for any } T \subseteq \mathcal{E}$. This further implies that $F \in \mathcal{I}, ~\forall F \subseteq B^{\text{c}}$.

We conclude by showing that if $F \in \mathcal{I}$, then $F \subseteq B^{\text{c}}$. Assume $F \in \mathcal{I}$. From the definition of matroid rank function, we have that $r(F) = |F|$. Also, we have
\begin{align}
    |F|&=r(F)\\
    &= \sum\limits_{\substack{i \in F, \\ i \notin B}} r(i) + \sum\limits_{\substack{i \in F, \\ i \in B}} r(i) \\
    & = |F \cap B^{\text{c}}|, \label{eq: MRE_4} 
\end{align}
where \eqref{eq: MRE_4} follows from the fact that $r(i) = 0,~\forall i \in B$. Now \eqref{eq: MRE_4} implies that $F\subseteq B^{\text{c}}$. This proves that $\mathcal{I} = 2^{B^{\text{c}}}$.

\section{Proof of Proposition~\ref{prop:MMI-specialcases}}\label{appendix:i}
\emph{Proof of Part 1).} The non-negativity follows directly from Theorem~\ref{MT inequality}, noting that entropy is a submodular function. From Part 1) of Corollary~\ref{Equality_Implies_Independence}, we have $(\mathcal{F}, \gamma)\text{-}MI(X_1, \cdots, X_n) = 0$ if and only if $X_1, \ldots, X_n$ are mutually independent.

\emph{Proof of Part 2).} 
For $\mathcal{F} = \{\!\!\{\{i\}: i \in [1:n]\}\!\!\}$ and $\gamma(F) = 1, ~\forall F \in \mathcal{F}$, we have
\begin{align*}
    \sum\limits_{F \in \mathcal{F}} \gamma(F)H(X_F) - H(X_{[1:n]}) & = \sum\limits_{i = 1}^n H(X_i) - H(X_{[1:n]}) \\
    & = \text{TC}(X_1; \cdots; X_n).
\end{align*}

\emph{Proof of Part 3).} 
For $\mathcal{F} = \{\!\!\{ [1:n] \setminus{\{i\}} : i \in [1:n] \}\!\!\}$ and $\gamma(F) = 1/(n-1), ~\forall F \in \mathcal{F}$, we have
\begin{align}
    & \sum\limits_{F \in \mathcal{F}}
    \gamma(F)H(X_F) - H(X_{[1:n]}) \\
    & = \sum\limits_{i = 1}^n \frac{1}{n-1} H(X_{[1:n] \setminus{\{i\}}}) - H(X_{[1:n]}) \\
    & = \frac{1}{n-1} \sum\limits_{i = 1}^n \left( H(X_{[1:n]}) - H(X_{\{i\}} | X_{[1:n] \setminus{\{i\}}}) \right) - H(X_{[1:n]}) \label{eq: prop1_3.1} \\
    & = \frac{1}{n-1} \left( H(X_{[1:n]}) - \sum\limits_{i = 1}^n H(X_{\{i\}} | X_{[1:n] \setminus{\{i\}}}) \right) \\
    & = \frac{1}{n-1} \text{DTC}(X_1; \cdots; X_n),
\end{align}
where \eqref{eq: prop1_3.1} follows from chain rule for entropy.

\emph{Proof of Part 4).} \begin{align}
    & \min_{\gamma:\mathcal{F}\rightarrow \mathbb{R}_+}\frac{(\mathcal{F},\gamma)\text{-}MI(X_1;\cdots;X_n)}{w(\gamma)-1} \\
    & = \min_{\gamma:\mathcal{F}\rightarrow \mathbb{R}_+} \text{Gap}_{\text{L}}(\mathsf{e}, \mathcal{F}, \bar{\gamma}) \label{eq: prop1_4.1} \\
    & = \min_{\gamma:\mathcal{F}\rightarrow \mathbb{R}_+} H(X_{[1:n]}) - \sum_{F \in \mathcal{F}} \bar{\gamma}(F) H(X_F | X_{F^\text{c}}) \\
    & = \min_{\gamma:\mathcal{F}\rightarrow \mathbb{R}_+} H(X_{[1:n]}) - \sum_{F \in \mathcal{F}} \gamma(F) H(X_F | X_{F^\text{c}}) \label{eq: prop1_4.2}\\
    & = H(X_{[1:n]}) - \max_{\gamma:\mathcal{F}\rightarrow \mathbb{R}_+} \sum_{F \in \mathcal{F}} \gamma(F) H(X_F | X_{F^\text{c}}) \\
    & = \text{SI}(X_1;\cdots; X_n),
\end{align}
where \eqref{eq: prop1_4.1} follows from duality and the fact that $\bar{\mathcal{F}} = \mathcal{F}$ for $\mathcal{F}=2^{[1:n]}\setminus\{\phi,[1:n]\}$, and \eqref{eq: prop1_4.2} follows from one-to-one correspondence between the fractional partitions $\gamma$ and $\bar{\gamma}$, i.e.,
\begin{align*}
    \bar{\gamma}(F^{\text{c}}) & = \frac{\gamma(F)}{w(\gamma) - 1}, \\ 
    \text{and \quad} \gamma(F) & = \frac{\bar{\gamma}(F^{\text{c}})}{w(\bar{\gamma}) - 1}.
\end{align*}

\section{Proof of Theorem~\ref{Multivariate_Mutual_Information}}\label{appendix:j}

We first present the proof of Property 1). Note that
\begin{align}
    (\mathcal{F}, \gamma) & \text{-}MI(X_1, \cdots, X_n) \nonumber \\
    & = \sum\limits_{F \in \mathcal{F}} \gamma(F)H(X_F) - H(X_{[1:n]}) \\
    & \leq \sum\limits_{F \in \mathcal{F}} \gamma(F)\sum\limits_{i \in F} H(X_i) - \sum\limits_{i = 1}^n H(X_i | X_{[1:i-1]}) \label{eq: MMI_2.1}\\
    & = \sum\limits_{i = 1}^n H(X_i) \left( \sum\limits_{\substack{F \in \mathcal{F}: \\ i \in F}} \gamma(F) \right) - \sum\limits_{i = 1}^n H(X_i | X_{[1:i-1]}) \label{eq: MMI_2.2} \\
    & = \sum\limits_{i = 1}^n H(X_i) -  \sum\limits_{i = 1}^nH(X_i | X_{[1:i-1]}) \label{eq: MMI_2.3} \\
    & = \sum\limits_{i = 1}^n H(X_i) - H(X_{[1:n]})\\
    &= \text{TC}(X_1, \cdots, X_n),
\end{align}
where \eqref{eq: MMI_2.1} follows from the subadditivity of entropy, \eqref{eq: MMI_2.2} follows from the interchange of summations, \eqref{eq: MMI_2.3} follows from the fact that $\sum\limits_{\substack{F \in \mathcal{F}: \\ i \in F}} \gamma(F) = 1$ and the chain rule for entropy. Finally, this bound is achieved when $\mathcal{F} = \{\!\!\{ \{i\}: i \in [1:n]\}\!\!\}$ and $\gamma(F) = 1, ~\forall F \in \mathcal{F}$.

We now present the proof of Property 3) and we use it to prove Property 2) later. Let $\mathcal{F} = \{\!\!\{ F_1, \ldots, F_p\}\!\!\}$ and correspondingly let $\tilde{\mathcal{F}} = \{\!\!\{ \tilde{F}_1, \ldots, \tilde{F}_p\}\!\!\}$. From the statement of Property 3) we also get that $\gamma(F_i) = \tilde{\gamma}(\tilde{F}_i)$ for all $i\in[1:p]$. Then,
\begin{align}
    (\mathcal{F}, & \gamma) \text{-}MI(X_1, \cdots, X_n) - (\mathcal{\tilde{F}}, \tilde{\gamma})\text{-}MI(X_1, \cdots, X_{n-1}) \\
    & = \sum\limits_{F \in \mathcal{F}} \gamma(F)H(X_F) - H(X_{[1:n]}) \nonumber \\
    & \quad \quad \quad - \sum\limits_{F \in \mathcal{\tilde{F}}} \tilde{\gamma}(\tilde{F})H(X_{\tilde{F}}) + H(X_{[1:n-1]}) \\
    & = \sum\limits_{i = 1}^p \left[\gamma(F_i)H(X_{F_i}) - \tilde{\gamma}(\tilde{F_i})H(X_{\tilde{F_i}}) \right] \nonumber \\
    & \quad \quad \quad - H(X_n | X_{[1:n-1]}) \label{eq: MMI_3.6}\\
    & = \sum\limits_{i = 1}^p \gamma(F_i) \left[ H(X_{F_i}) - H(X_{\tilde{F_i}}) \right] - H(X_n | X_{[1:n-1]})\\
    & = \sum\limits_{i = 1}^p \gamma(F_i) H(X_{F_i \setminus{\tilde{F_i}}} | X_{\tilde{F_i}}) - H(X_n | X_{[1:n-1]}) \label{eq: MMI_3.1} \\
    & = \sum\limits_{\substack{F \in \mathcal{F}: \\ n \in F}} \gamma(F) H(X_n | X_{F \setminus{\{n\}}}) - H(X_n | X_{[1:n-1]}) \label{eq: MMI_3.2} \\
    & = \sum\limits_{\substack{F \in \mathcal{F}: \\ n \in F}} \gamma(F) H(X_n | X_{F \setminus{\{n\}}}) \nonumber \\
    & \quad \quad - \sum\limits_{\substack{F \in \mathcal{F}: \\ n \in F}} \gamma(F) H(X_n | X_{[1:n-1]}) \label{eq: MMI_3.3} \\
    & = \sum\limits_{\substack{F \in \mathcal{F}: \\ n \in F}} \gamma(F) \left[H(X_n | X_{F \setminus{\{n\}}}) - H(X_n | X_{[1:n-1]}) \right] \label{eq: MMI_3.4} \\
    & = \sum\limits_{\substack{F \in \mathcal{F}: \\ n \in F}} \gamma(F) I(X_n; X_{[1:n-1] \cap F^{\text{c}}} | X_{[1:n-1] \cap F}), \label{eq: MMI_3.5}
\end{align}
where \eqref{eq: MMI_3.6} and \eqref{eq: MMI_3.1} follow from chain rule for entropy, and \eqref{eq: MMI_3.2} follows from the fact that $F_i \setminus{\tilde{F}_i}$ can only have two possible values, i.e., $\phi$ (when $F_i = \tilde{F}_i$) or $\{n\}$ (when $F_i \neq \tilde{F}_i$) as $\tilde{\mathcal{F}}$ is constructed from $\mathcal{F}$ by truncating $n$ in all $F \in \mathcal{F}$ where $n \in F$. We use the fact that $\sum_{\substack{F \in \mathcal{F}: n \in F}} \gamma(F) = 1$ to write \eqref{eq: MMI_3.3}. \eqref{eq: MMI_3.5} follows from the fact that $F \setminus{\{n\}} = [1:n-1] \cap F$ when $n \in F$. Finally, we can observe that if we lower bound $H(X_n | X_{F \setminus{\{n\}}})$ in the first term of each summand in \eqref{eq: MMI_3.4} by $H(X_n | X_{[1:n-1]})$ then we would get that $(\mathcal{F}, \gamma)\text{-}MI(X_1, \cdots, X_n) \geq (\mathcal{\tilde{F}}, \tilde{\gamma})\text{-}MI(X_1, \cdots, X_{n-1})$. On the contrary, if we were to upper bound $H(X_n | X_{F \setminus{\{n\}}})$ in the first term of each summand in \eqref{eq: MMI_3.4} by $H(X_n)$, then we would get that $(\mathcal{F}, \gamma)\text{-}MI(X_1, \cdots, X_n) \leq (\mathcal{\tilde{F}}, \tilde{\gamma})\text{-}MI(X_1, \cdots, X_{n-1}) + I(X_n; X_{[1:n-1]})$.

With this information, we will now prove Property 2). We first prove the inequality for the case when $Y_i = X_i, ~\forall i \in [1:n-1]$. Proof for the general inequality then follows from induction and symmetry arguments.
\begin{align}
    (\mathcal{F}, \gamma) & \text{-}MI(X_1, \cdots, X_{n-1}, Y_n) \nonumber \\
    & \quad \quad - (\mathcal{F}, \gamma)\text{-}MI(X_1, \cdots, X_{n-1}, X_n) \label{eq: MMI_4.1}\\
    & = \sum\limits_{\substack{F \in \mathcal{F}: \\ n \in F}} \gamma(F) I(Y_n; X_{[1:n-1] \cap F^{\text{c}}} | X_{[1:n-1] \cap F}) \nonumber \\
    & \quad \quad - \sum\limits_{\substack{F \in \mathcal{F}: \\ n \in F}} \gamma(F) I(X_n; X_{[1:n-1] \cap F^{\text{c}}} | X_{[1:n-1] \cap F}) \label{eq: MMI_4.2} \\
    & \leq \sum\limits_{\substack{F \in \mathcal{F}: \\ n \in F}} \gamma(F) I(Y_n, X_n; X_{[1:n-1] \cap F^{\text{c}}} | X_{[1:n-1] \cap F}) \nonumber \\ 
    & \quad \quad - \sum\limits_{\substack{F \in \mathcal{F}: \\ n \in F}} \gamma(F) I(X_n; X_{[1:n-1] \cap F^{\text{c}}} | X_{[1:n-1] \cap F}) \label{eq: MMI_4.3} \\
    & = \sum\limits_{\substack{F \in \mathcal{F}: \\ n \in F}} \gamma(F) I(Y_n; X_{[1:n-1] \cap F^{\text{c}}} | X_{[1:n-1] \cap F}, X_n) \label{eq: MMI_4.4} \\
    & \leq \sum\limits_{\substack{F \in \mathcal{F}: \\ n \in F}} \gamma(F) H(Y_n | X_{[1:n-1] \cap F}, X_n) \label{eq: MMI_4.5} \\
    & \leq \sum\limits_{\substack{F \in \mathcal{F}: \\ n \in F}} \gamma(F) H(Y_n | X_n) \label{eq: MMI_4.6} \\
    & = H(Y_n | X_n), \label{eq: MMI_4.7}
\end{align}
where \eqref{eq: MMI_4.2} follows from \eqref{eq: MMI_3.5}, and \eqref{eq: MMI_4.3} and \eqref{eq: MMI_4.4} follow from the application of chain rule for mutual information, \eqref{eq: MMI_4.6} follows from the fact that conditioning only reduces entropy, and \eqref{eq: MMI_4.7} uses the fact that $\sum\limits_{\substack{F \in \mathcal{F}: \\ n \in F}} \gamma(F) = 1$.

Proof for Property 4) follows directly by invoking Han's result on the symmetry property of entropy vectors in the entropy vector space \cite[Lemma~3.1]{Han78} and by noting that $\gamma_n = -1$. 

\section{Proof of Proposition~\ref{Determinantal_Equality}}\label{appendix:k}

Let $X_{[1:n]}$ have a multivariate Gaussian distribution with mean $0$ and covariance matrix $K$, then
\begin{align}
    h(X_{[1:n]}) & = \frac{1}{2}\log{[(2\pi e)^n |K|]}, \\
    \text{and \quad} h(X_F) & = \frac{1}{2}\log{[(2\pi e)^{|F|} |K(F)|]}.
\end{align}
Now, \cite[Proof of Corollary~III]{MadimanT10} in conjunction with part 1) of Corollary~\ref{Equality_Implies_Independence} (applied to differential entropy) implies that $ \prod\limits_{F \in \mathcal{F}} |K(F)|^{\gamma(F)} = |K| $ if and only if $X_i$, $i\in[1:n]$ are mutually independent, which is equivalent to $K$ being a diagonal matrix, i.e., $K_{ij}=0$, for all $i\neq j$.

\fi



\end{document}